\newcommand{\Xnote}[1]{[\textcolor{pink}{\textit{#1}}]}
\newcommand{\SDnote}[1]{[\textcolor{green}{\textit{#1}}]}
\newcommand{\Prob}[1]{{\ensuremath{\mathbb{P}\left[ #1 \right]} }}
\newcommand{\Expc}[1]{{\ensuremath{\mathbb{E}\left[ #1 \right]} }}
\newtheorem{lemma}{Lemma}
\newcommand{\deltaE}{\tilde{\delta}_{E}}
\newcommand{\key}{\mathcal{S}}
\newcommand{\numnodes}{n}
\newcommand{\numintf}{m}
\newcommand{\eff}{E}
\newcommand{\msg}{\mathcal{M}}
\newcommand{\initkey}{\sigma}
\newcommand{\auth}{\mathit{auth}}
\title{Exchanging Secrets Without Using Cryptography 
\vspace{-0.4cm}}
\author{Iris Safaka, Mahdi J Siavoshani, Uday Pulleti, Emre Atsan, \\
Christina Fragouli, Katerina Argyraki, Suhas Diggavi \\
(EPFL and UCLA)
\vspace{-0.3cm}}
\begin{document}
\maketitle

\begin{abstract}
We consider the problem where a group of $\numnodes$ nodes, connected to the same broadcast channel (e.g., a wireless network),
want to generate a common secret bitstream, in the presence of an adversary Eve, who tries to obtain information on the bitstream.
We assume that the nodes initially share a (small) piece of information, but do not have access to any out-of-band channel.
We ask the question: can this problem be solved without relying on Eve's computational limitations,
i.e., without using any form of public-key cryptography?

We propose a secret-agreement protocol, where the $\numnodes$ nodes of the group keep exchanging bits until they have all 
agreed on a bit sequence that Eve cannot reconstruct with very high probability.
In this task, the nodes are assisted by a small number of interferers, whose role is to create channel noise in a way 
that bounds the amount of information Eve can overhear.
Our protocol has polynomial-time complexity and requires no changes to the physical or MAC layer of network devices. 

First, we formally show that, under standard theoretical assumptions, our protocol is information-theoretically secure,
achieves optimal secret-generation rate for $\numnodes = 2$ nodes, and scales well to an arbitrary number of nodes.
Second, we adapt our protocol to a small wireless $14$ m$^2$ testbed;
we experimentally show that, if Eve uses a standard wireless physical layer 
and is not too close to any of the nodes,
$8$ nodes can achieve a secret-generation rate of $38$ Kbps.
To the best of our knowledge, ours is the first experimental demonstration of information-theoretic
secret exchange on a wireless network at a rate beyond a few tens of bits per second.
\end{abstract}

\section{Introduction}
%========================

We consider the problem where a group of $\numnodes$ nodes, connected to the same broadcast channel (e.g., a wireless network),
want to generate a common secret bitstream, in the presence of an adversary Eve, who tries to obtain information on the bitstream.
We assume that the $\numnodes$ nodes initially share a (small) piece of information,
but do not have access to any out-of-band channel or public-key infrastructure.
Today, this problem can be solved using public-key cryptography, such as the
Diffie-Hellman~\cite{DiH76} or the RSA \cite{RSA78,last} key-agreement protocols:
the nodes could use the initially shared information to form public keys,
then periodically use a key-agreement protocol to generate new secrets.
These protocols make the assumption that Eve
does not have the computational resources to perform certain operations, such as large-integer factorization.
We ask the question: can we solve this problem {\em without} making any assumption about the adversary's computational capabilities,
i.e., without using public-key (or any classic form of) cryptography?

Our work is based on the following observation: whereas one node's transmission may be overheard by multiple nodes, if
there is noise in the broadcast channel, it becomes unlikely that any two nodes 
overhear exactly the same bits.
To illustrate, consider a person speaking in a low voice in a crowded room---people standing nearby will hear parts of the speech,
but if the room is sufficiently noisy, it is unlikely that any two of these people 
(standing at different locations) will hear the exact same parts of what the speaker says.
We build upon this observation and use channel noise to prevent eavesdroppers from hearing exactly the same bits
as honest participants.  We propose a new secret-agreement protocol, where the $\numnodes$ honest nodes keep exchanging
bits until they have all agreed on a bit sequence that Eve, with very high probability, cannot reconstruct.  
In this task, the $\numnodes$ nodes are assisted by \emph{interferers}
that create the right kind and level of noise to bound the amount of information that Eve overhears.

Unlike existing cryptographic solutions to this problem, which rely on the fact that the adversary cannot compute some function,
our protocol relies on the fact that the adversary cannot overhear {\em all} the information received by an honest node.
In other words, we shift the challenge from computation to network presence:
for cryptographic approaches, a dangerous adversary is one with high computational power (e.g., one with access to quantum computers);
for our approach, a dangerous adversary is one who is physically present in many locations in the network at the same time.
%This makes sense, for instance, in public places with wireless connectivity.

\iffalse
If we can assume that eavesdroppers do not collude \Xnote{ We discuss
  a bit collusion in the discussion}, then our approach provides
\emph{information-theoretic secrecy}, i.e., it guarantees that an
eavesdropper cannot obtain \emph{any} information about the secret
established among the $\numnodes$ nodes, no matter what its
computational capabilities are.  Using our secret-agreement protocol,
it becomes possible to have information-theoretic secret group
communication: say Alice (one of the $\numnodes$ nodes) wants to send
a message $\msg$ of $|\msg|$ bits to the other $n-1$ nodes; the
$\numnodes$ nodes can use our protocol to establish a common secret
$\key$ of $|\msg|$ bits; then Alice can use $\key$ as a one-time pad
to xor with $\msg$ and broadcast the resulting cyphertext---without
knowing $\key$ it is mathematically impossible to recover $\msg$
\SDnote{[I do not understand this, it depends on what is meant by
$0$ information, without computational assumptions?]}.

To the best of our knowledge, this cannot be achieved with
existing cryptographic protocols: if nodes do not have access to any
out-of-band channel, the $\numnodes$ nodes would have to use
public-key cryptography to establish a secret $\key'$; even if $\key'$
was used as a one-time pad xor-ed with $\msg$, cryptographic
approaches cannot guarantee that Eve has $0$ information about $\msg$
if she knows the public key corresponding to $\key'$.
\fi

We do not imply that there is any pressing need to replace the existing crypto-systems 
that rely on the adversary's computational limitations.
However, we believe that exploring alternative approaches (which rely on different kinds of adversary limitations)
will become of increasing interest in the near future, as governments and corporations acquire massive
amounts of computational capabilities, and as new technologies, such as quantum computing, mature.  
This interest is already present in the industry community, where several companies are developing
quantum key distribution (QKD) systems~\cite{C1}; these enable a pair of nodes 
to exchange a bitstream that is information-theoretically secure, i.e., an adversary that observes the exchange
obtains $0$ information on the bitstream, independently from her computational capabilities.
A typical commercial application envisioned for QKD systems is the periodic generation of one-time pads at a high enough rate
to enable information-theoretically secure transmission of real-time video, e.g., for military operations~\cite{Qmil}.
On the other hand, QKD systems are expensive due to the need for sophisticated equipment, such as photon detectors.
This motivated us to start exploring the feasibility of secrecy that does not rely on computational
limitations, but also does not require expensive equipment.

In this paper, we make one theoretical and one practical contribution:

(1) On the theory side, we design a protocol that enables
$\numnodes \ge 2$ nodes, which are connected to the same broadcast channel and initially share a small common secret $\initkey$,
to establish a common secret bitstream.
Assuming a typical information-theory model (independent erasure channels between the nodes and known erasure probabilities),
we formally show that:
(i) our protocol is information-theoretically secure; 
(ii) it achieves a secret-generation rate that is optimal for $\numnodes = 2$ nodes
and scales well to an arbitrary number of nodes $\numnodes$.
%\footnote{
%We can show that our protocol achieves optimal secret-generation rate for any $\numnodes$,
%if we assume an unlimited rate reliable public channel.
%However, in this paper, we assume that all communication happens over a limited-rate broadcast channel, 
%and an unlimited rate side-channel is not available. Establishing a tight upper bound for group secrecy over
%a limited rate public channel is an open problem in information theory.}
(iii) it requires per-node operations that are of polynomial complexity.
To the best of our knowledge,
we advance the state of the art in information-theoretic security by (i) allowing groups
to have an arbitrary number of nodes $\numnodes$, and (ii) providing
a polynomial-time protocol that is implementable even in the simplest
wireless devices without any changes to their MAC or physical layers.
\iffalse
Nodes participating in our protocol do not need to perform any complex
operations or measure any physical properties of the wireless channel
other than count the fraction of packets that are received correctly.
\fi

(2) On the practical side, 
we adapt our protocol to a small wireless testbed that covers an area of $14$ m$^2$ and consists of
$8$ trusted nodes, $6$ interferers (each with two directional antennae), and an adversary. 
We experimentally show that, in our testbed,
if the adversary uses a standard wireless physical layer\footnote{We argue that our protocol 
will work well even if the adversary possesses a custom physical layer, 
but we have not showed this experimentally.}
 and is located within no less than $1.75$ m from any trusted node,
the $\numnodes=8$ trusted nodes achieve
a common secret generation rate of $38$ secret kilobits per second.
This is in contrast to the state of the art in deploying information-theoretical security on wireless networks, 
which achieves (only pair-wise) secret-generation rates on the order of a few tens of bits per second.
To the best of our knowledge, ours is the first experimental demonstration of information-theoretic
secret exchange on a wireless network at rates of kilobits per second.

We think that this is an important first step toward bridging the gap between theoretical and practical work in this area:
on the one hand, we confirm that the criticism typically addressed at theoretical work on information-theoretic security
(independent erasure channels and predictable erasure probabilities are not realistic network conditions) is valid;
on the other hand, we demonstrate that it is feasible to emulate these conditions in a real wireless network,
enough to generate an information-theoretically secure bitstream of tens of kilobits per second.
That said, we still need to complete several steps before our protocol is ready for non-experimental deployment:
we need to harden it against collusion and wireless denial-of-service attacks;
and we need to find a more practical and energy-efficient way of creating interference than using dedicated, 
trusted interfering nodes (discussion in Section~\ref{sec:discuss}).

In the rest of the paper, we summarize the main idea behind our work (\S\ref{sec:setup}),
we describe our basic secret-agreement protocol, which is information-theoretically secure against a passive adversary (\S\ref{sec:protocol}),
state its properties (\S\ref{sec:analysis}), and describe how to secure the protocol against active adversaries (\S\ref{sec:authenticate}).
Next we describe how we adapted our protocol to a small wireless testbed (\S\ref{sec:real})
and present our experimental results (\S\ref{sec:jam}).
Finally, we discuss limitations and challenges (\S\ref{sec:discuss})
and related work (\S\ref{sec:related}), and we conclude (\S\ref{sec:conclusion}).

\section{Setup}
\label{sec:setup}

\subsection{Problem Statement}

We consider a set of $\numnodes \ge 2$ trusted nodes, $T_0, \ldots, T_{n-1}$, which share an initial common secret $\initkey$
and are connected to the same broadcast channel;
we will refer to these nodes as \emph{terminals};
sometimes we will refer to terminals $T_0$, $T_1$, and $T_2$ respectively as Alice, Bob, and Calvin.
We also assume that we have available $\numintf$ trusted \emph{interferers}, i.e., nodes that can generate noise.

We consider an adversary, Eve, who is connected to the same broadcast channel as the terminals, but has $0$ information on $\initkey$.
When we say that Eve acts as a ``passive'' adversary, we mean that she does not perform any transmissions, i.e.,
she only tries to eavesdrop on the terminals' communications;
when we say that Eve acts as an ``active'' adversary, we mean that she may perform transmissions, e.g., try to impersonate a terminal.
As a first step, in this paper, we assume that Eve only has access to a standard physical layer, i.e., no custom hardware.
Also we assume that there is no collusion, i.e., there may be multiple adversaries like Eve, and each of them may move
around the network, but they may not exchange information.
In Section~\ref{sec:discuss} (where we discuss the limitations of our proposal),
we describe how we believe that these issues can be addressed.

Our goal is to design a protocol that enables the $\numnodes$ terminals to establish a new common secret $\key$
of size $|\key| \gg |\initkey|$, such that Eve (whether acting as a passive or active adversary) obtains $0$ information on $\key$. 
By periodically running this protocol, the terminals will be able to establish a common secret bitstream.

The terminals communicate with each other in two ways:
(1) When we say that terminal $T_i$ \emph{transmits} a packet, we mean that it broadcasts the packet once.
(2) When we say that terminal $T_i$ \emph{reliably broadcasts} a packet,
we mean that it ensures that all other terminals $T_i$ receive it, e.g., through acknowledgments and retransmissions;
to be conservative, we assume that Eve receives all reliably broadcast packets.\footnote{Our use of reliable broadcasting
is similar to the use of a ``public channel'' in information theory, with the difference that we do not assume
that this channel has zero cost, i.e., when we compute protocol efficiency, we do take into account the transmissions
made using this channel.}

\iffalse
Alice is connected to each terminal $T_i$ through a \emph{packet erasure channel}, i.e.,
when Alice transmits a packet, $T_i$ correctly receives it with probability $1 - \delta_i$,
where $\delta_i$ is the \emph{erasure probability} of the channel.
Similarly, Alice is connected to Eve through a (different) erasure channel with erasure probability $\delta_E$.
\fi

We assume a packet-erasure channel between each pair of nodes:
when $T_i$ transmits a packet, $T_j$ (or Eve) receives the entire packet correctly with probability $1-\delta$,
or does not receive it at all; $\delta$ is the \emph{erasure probability} of the channel between $T_i$ and $T_j$ (or Eve).

We summarize the most commonly used symbols in Table~\ref{tab:symbols}.

\subsection{Main Idea}
\label{sec:setup:main}

We define the ``theoretical'' network conditions as follows:
\begin{enumerate}
\item
The channel between Alice and any other terminal $T_i$ is independent from the channel between Alice and Eve,
i.e., when Alice transmits a packet, the event that $T_i$ receives it is independent from the event that Eve does.
\item
We know the erasure probability $\delta_E$ of the Alice-Eve channel.
\end{enumerate}

Suppose, for the moment, that these theoretical conditions hold and that we have only $n=2$ terminals, Alice and Bob.
The erasure probability of the Alice-Bob channel is $\delta_1 = 0.5$
and of the Alice-Eve channel $\delta_E = 0.4$, i.e., Eve has better connectivity to Alice than Bob.

Alice wants to establish a secret $\key$ with Bob. 
To this end, Alice transmits $10$ packets, $x_1,x_2,\ldots, x_{10}$.
Bob correctly receives $5$ of them, $x_1,x_3,x_5,x_7,x_9$. 
Eve correctly receives $6$ of the transmitted packets, $x_1,x_3,x_5,x_6,x_8,x_{10}$.
At this point, there exist $2$ packets, $x_7,x_9$, whose contents are known to both Alice and Bob, but not to Eve.  
Alice does not know which are these packets, but she can guess \emph{how many} they are:
First, she learns which packets Bob received (which is easily accomplished by having Bob send a feedback message to Alice,
specifying the identities of the packets he received).
Combining this piece of information with the erasure probability between Alice and Eve, $\delta_E$,
Alice can compute the expected number of packets received by Bob but not Eve;
in our example, $\delta_E = 0.4$, hence, of the $5$ packets that Bob received, Eve is expected to not have received $0.4 \cdot 5 = 2$ packets.

Alice and Bob establish a secret by performing privacy amplification:
Alice creates $2$ linear combinations of the packets that Bob received, 
$s_1=x_1 \oplus x_5 \oplus x_9$ and $s_2=x_3 \oplus x_7$ (where $\oplus$ denotes the bit-wise XOR operation over the payloads
of the corresponding packets), and creates the secret as their concatenation, $\key = \langle s_1, s_2 \rangle$.
Then, Alice reliably broadcasts to Bob which packets to combine to also create the secret.
Even though Eve receives the reliable broadcast and learns the identities of the packets that were combined to create the secret,
she cannot compute either $s_1$ or $s_2$, because she does not know the contents of $x_7$ and $x_9$. 
%\Xnote{That is, Alice and Bob perform privacy amplification, using simple linear combinining operations.}

\subsection{Challenges}
\label{sec:setup:challenges}

To turn the above idea into a secret-agreement protocol, we answer the following questions:

{\bf (1) Choosing the right linear combinations.}
In the above example, Alice chose $2$ particular linear combinations, whose values Eve could not compute, 
i.e., Alice performed privacy amplification using very simple linear operations. 
But is it always possible for Alice to  choose  linear combinations of the packets received by Bob, such that
Eve cannot compute their values (without knowing which packets Eve received)?

{\bf (2) Extending to multiple nodes.}
Suppose that we have not two, but three terminals, Alice, Bob, and Calvin, that need to establish a common secret $\key$.
Of course, once we have a scheme that works for two terminals, Alice could use it to establish a secret, $\key_B$, with Bob,
a separate secret, $\key_C$, with Calvin, then secretly communicate to each of Bob and Calvin a common secret $\key$.
However, intuitively, we should be able to do better---surely, Bob and Calvin must have received a common subset of Alice's packets.
How can we leverage this common subset to build a better protocol,
and how do we generalize it to an arbitrary number of terminals $\numnodes$?

We address questions (1) and (2) in Section~\ref{sec:protocol}, where we present a secret-agreement
protocol between $\numnodes$ nodes, which assumes theoretical network conditions and passive adversaries.

{\bf (3) Deterring active attacks.}
In the above example, Eve only tried to eavesdrop on Alice's and Bob's communications.
In reality, she could also try to impersonate them, e.g., pretend to be Alice and initiate the secret exchange with Bob.
So, how do we prevent impersonation by adversaries?

We address question (3) in Section~\ref{sec:authenticate}, where we
describe how to add authentication to our secret agreement to protect
it against active adversaries.

{\bf (4) Emulating the theoretical network conditions.} 
In the above example, Alice and Bob were able to establish a secret, because
the Alice-Bob channel was independent from the Alice-Eve one, which ensured that
Bob received some packets that Eve did not receive.
In reality, we cannot assume this:
if Eve positions herself close to Bob, it is quite likely that she will receive all the packets that he receives.
The next question then is, can we artificially condition a real wireless network, so as to ensure
that Bob (or any trusted node) receives some minimum number of packets that Eve does not receive?

\iffalse
Also, the conditions in a wireless network can significantly vary over space and time---two
receivers physically separated by a few meters (or residing at the same location but at different points in time) 
may experience quite different conditions due to fading and/or multipath phenomena.
Even if one measures the erasure probability between a transmitter and several receivers,
they could not use these measurements to accurately estimate the erasure probability between the same transmitter 
and a new receiver---we will demonstrate this experimentally in Section~\ref{sec:jam}.
One question then is, how do we enforce the three network conditions that make our secret agreement feasible?
\fi

{\bf (5) Estimating what the adversary received.} 
In the above example, Alice created a secret by concatenating $2$ linear combinations, 
exactly as many as the number of packets received by Bob but not Eve.
This is the longest secret that Alice and Bob could establish without revealing any information about the secret to Eve:
suppose they attempted to generate a longer secret, by creating and concatenating $3$ linear combinations of the packets received by Bob,
e.g., $\key = \langle s_1, s_2, s_3 \rangle$, where $s_1 = x_1 \oplus x_5 \oplus x_9$, 
$s_2 = x_3 \oplus x_7$, and $s_3 = x_3 \oplus x_5 \oplus x_7 \oplus x_9$;
but then, Eve would learn something about the secret---that $s_1 \oplus s_2 \oplus s_3 = x_1$, 
which would increase her chances of guessing the secret.
The gist is that, since there are only $2$ packets received by Bob but not Eve, there exist exactly $2$ linear combinations
of the packets received by Bob that do not reveal any information to Eve.
Hence, it is important for Alice to correctly guess how many packets were received by Bob but not Eve:
if she overestimates this number, she will create a longer secret than she should, i.e., reveal some information about the secret to Eve;
if she underestimates this number, she will create a shorter secret than she could, i.e., achieve a lower secret-generation rate than
the maximum possible.
So, the last question is, how do we estimate how many packets Eve received in order to create the right secret size?

We address questions (4) and (5) in Section~\ref{sec:jam}, where we adapt our secret-agreement protocol to a 
small wireless testbed.

\subsection{Background}

We now summarize the background needed to read the rest of the paper.
Readers who are familiar with linear coding and information-theoretic secrecy
can skip this section.

Consider two packets, $x_1$ and $x_2$, of the same length.
When we say that we \emph{linearly combine these packets over the field $\mathbb{F}_{2^s}$}, we mean that:
(1) We take the contents of packet $x_1$ and interpret every $s$ consecutive bits as a symbol;
as a result, we get a sequence of symbols, $\langle p_1, p_2, \ldots \rangle$.
(2) We do the same for $x_2$, such that we get another sequence of symbols, $\langle q_1, q_2, \ldots \rangle$.
(3) We perform some linear operation, denoted by $\oplus$, over each pair of symbols $p_i$ and $q_i$;
the resulting sequence, $\langle p_1 \oplus q_1, p_2 \oplus q_2, \ldots \rangle$, is the outcome of the linear combination of the two packets.
For example, if our field is $\mathbb{F}_{2}$, i.e., $s=1$, then we interpret each bit as a symbol;
in that case, performing a linear combination is equivalent to bitwise XOR-ing the contents of the two packets. 

The secrecy of a system $\mathcal{S}$ can be expressed as a function of its entropy, $H(\mathcal{S})$, 
which is a measure of an eavesdropper's uncertainty about the system.
By observing the system, the eavesdropper may learn something about it;
this is captured by the concept of \emph{conditional entropy}, $H(\mathcal{S}|\mathcal{O})$,
which expresses the eavesdropper's uncertainty about the system $\mathcal{S}$, 
after she has completed her observation, $\mathcal{O}$.
A system $\mathcal{S}$ is \emph{information-theoretically secure}, when, by
observing $\mathcal{S}$, an eavesdropper does not decrease her uncertainty about it,
i.e., $H(\mathcal{S}|\mathcal{O}) \rightarrow H(\mathcal{S})$.

In our context, the ``observer'' is Eve, the ``system'' is the secret $\key$,
and Eve's ``observation'' is the set of packets transmitted by the terminals that Eve receives, $\mathcal{X}_E$.
In the worst case, after making her observation, Eve knows the value of $\key$, in which case $H(\key|\mathcal{X}_E) \rightarrow 0$.
In the best case, after making her observation, Eve knows only the length $|\key|$ of the secret, i.e., to her, 
$\key$ is equally likely to have any value from $0$ to $2^{|\key|-1}$;
in this case, $H(\key|\mathcal{X}_E) \rightarrow -\log \frac{1}{|\key|}$, which is the maximum uncertainty that Eve can ever have about $\key$.
So, we say that our secret agreement is information-theoretically secure, 
when, by receiving the set of packets $\mathcal{X}_E$, 
Eve does not decrease her uncertainty about the secret, i.e., 
$H(\key|\mathcal{X}_E) \rightarrow H(\key) = -\log \frac{1}{|\key|}$.

\subsection{Quality Metrics}

We use two metrics to characterize the quality of our secret-agreement protocol:

{\bf Reliability.}
$R = \frac{H(\key|\mathcal{X}_E)}{H(\key)},$
where $\key$ is the secret generated by the protocol, $\mathcal{X}_E$ is the set of packets transmitted by the terminals 
and correctly received by Eve, 
$H(\key)$ is $\key$'s entropy from Eve's point of view, and $H(\key|\mathcal{X}_E)$ is $\key$'s conditional entropy
from Eve's point of view, after Eve has received $\mathcal{X}_E$.
Reliability $R$ means that Eve can correctly guess each secret bit of $\key$ with probability $2^{-R}$,
hence the entire secret $\key$ with probability $2^{-R|\key|}$.
In the example of Section~\ref{sec:setup:main},
Eve learns nothing about $\key$ after receiving packets $x_1, x_3, x_5, x_6, x_8, x_{10}$, hence the reliability of that
particular secret agreement is $1$.

{\bf Efficiency.}
$\eff = \frac{\mathit secret\;\;size}{\mathit transmitted\;\;bits}.$
This captures the cost of our protocol, i.e., the
amount of traffic it produces in order to generate a secret of a given length.
Maurer has formally shown that, assuming the theoretical network conditions, the maximum efficiency that can be achieved when we have two terminals, Alice and Bob,
is $\eff = \delta_E (1 - \delta_1)$, where $\delta_E$ and $\delta_1$ are Eve's and Bob's erasure probabilities, respectively
(although he has not shown \emph{how} to achieve this upper bound using operations of bounded complexity)~\cite{Ma-IT93}.
In the example of Section~\ref{sec:setup:main},
Alice sends $10$ packets in order to establish a secret whose length corresponds to $2$ packets;
hence, ignoring, for the moment, the feedback sent by Bob to Alice, 
the efficiency of that particular secret agreement is $\eff = 0.2$, 
which is the maximum possible according to Maurer's upper bound.

Ideally, we want our protocol to have reliability $1$
and efficiency that is equal to the known optimal for $\numnodes = 2$ terminals
and scales well to an arbitrary number of terminals.

\begin{table}[t]
\centering
\begin{tabular}{|l|l|}
\hline
{\bf Symbol} & {\bf Meaning}  \\
\hline
$\numnodes$ & Number of terminals \\
$\numintf$ & Number of trusted interferers \\
$T_i$ & Terminal $i$ \\
$\delta_E$ & Erasure probability of Alice-Eve channel \\
\hline
\multicolumn{2}{|c|}{Parameters used in both the Basic and Adapted protocols} \\
\hline
$N$ & Number of $x$-packets transmitted by Alice \\
    & (initial phase, step $1$) \\
$N^*$ & Number of $x$-packets received by at least \\
    & one terminal (initial phase, step $1$) \\
$N_i$ & Number of $x$-packets received by $T_i$ \\
    & (initial phase, step $1$) \\
$M$ & Number of $y$-packets created in the initial phase \\
    & (initial phase, step $3$) \\
$M_i$ & Number of $y$-packets reconstructed by $T_i$ \\
    & (initial phase, step 4) \\
$L$ & Number of $s$-packets created by Alice \\
    & (reconciliation phase, step $3$ (Basic) or step $4$ (Adapted)) \\
\hline
\multicolumn{2}{|c|}{Parameters used only in the Adapted protocol} \\
\hline
$M^j$ & Number of $y$-packets created by terminal $T_j$ \\
    & in the Adapted protocol (step 1, initial phase) \\
$K_i$ & Number of $z$-packets created by terminal $T_i$ \\
    & in the Adapted protocol (step 2, reconciliation phase) \\
\hline
\end{tabular}
\caption{\label{tab:symbols} Commonly used symbols}
\vspace{-1cm}
\end{table}

\section{Basic Secret-Agreement Protocol}
\label{sec:protocol}

In this section, we describe a secret-agreement protocol that:
given $\numnodes$ trusted nodes, it allows them to create a common secret $\key$.
We will show that, assuming the theoretical network conditions, a \emph{passive} 
adversary obtains $0$ information about $\key$.

\subsection{Gist}

Alice first transmits $N$ packets.
Suppose that, of these, $\hat{N}$ are commonly received by all the terminals,
hence $\delta_E \hat{N}$ are received by all the terminals but not Eve.
At this point, Alice could create a secret by creating $\delta_E \hat{N}$ combinations 
of the $\hat{N}$ commonly received packets
(as she did in the example of Section~\ref{sec:setup:main});
however, $\hat{N}$ decreases exponentially with the number of terminals, such that $\delta_E \hat{N}$ quickly goes to $0$.
Instead, Alice transmits a second round of packets,
with the purpose of increasing the amount of information that is commonly known to her and all the other terminals,
without increasing, as much as possible, the amount of information known to Eve.

Hence, our protocol consists of two phases.
In the \emph{initial} phase, Alice transmits $N$ packets, which results in her sharing
some number of (different) secret packets with each terminal.
In the \emph{reconciliation} phase, Alice transmits additional information, which results in her sharing
the \emph{same} secret packets with \emph{all} terminals.
I.e., the reconciliation phase does not increase the number of secret packets shared by Alice and any terminal,
just ``redistributes'' them, such that all terminals share the same number of secret packets.

The basic structure of the two phases is similar:
Alice first transmits some number of packets (e.g., $x_1, \ldots x_{10}$);
she creates linear combinations of these packets (e.g., $y_1, y_2$) and tells the other terminals
how she created each combination (e.g., that $y_1 = x_1 \oplus x_5 \oplus x_9$ and $y_2 = x_3 \oplus x_7$);
each terminal reconstructs as many linear combinations as it can (depending on which initial packets it received).
The point of this exchange is always to ``mix'' the information shared by the terminals,
such that, even if Eve has overheard some of the initial packets, she still has no
information on the linear combinations.

\subsection{Basic Protocol Description}

\paragraph*{Initial Phase}
\begin{enumerate}
\item
Alice transmits $N$ packets (we will call them \emph{$x$-packets}).
\item
Each terminal $T_{i\neq 0}$ reliably broadcasts a feedback message specifying which $x$-packets it received. 
\item 
Alice creates $M$ linear combinations of the $x$-packets
(we will call them \emph{$y$-packets}), as described in ``$y$-packet construction'' below.
She reliably broadcasts the coefficients she used to create the $y$-packets.
\item 
Each terminal $T_{i \neq 0}$ reconstructs as many (say $M_i$) of the $y$-packets as it can.
\end{enumerate}

\smallskip
At this point, Alice shares $M_i$ $y$-packets with each terminal $T_i$.
If $\numnodes =2$ terminals, the common secret is the concatenation of 
the $M_1$ $y$-packets shared with $T_1$, $\key = \langle y_1, \ldots, y_{M_1} \rangle$,
and the protocol terminates.

\smallskip
\paragraph*{Reconciliation Phase}
\begin{enumerate}
\item 
Alice creates $M - \min_i M_i$ linear combinations of the $y$-packets (we will call them \emph{$z$-packets}),
as described in ``$z$-packet construction'' below.
She reliably broadcasts both the contents and the coefficients of the $z$-packets.
\item
Each terminal $T_{i\neq 0}$ reconstructs all the $M$ $y$-packets by combining the $M_i$ $y$-packets 
it reconstructed in step $4$ of the initial phase with ${M-M_i}$ of the $z$-packets.
\item 
Alice creates $L = \min_i M_i$ linear combinations of the $y$-packets
(we will call them \emph{$s$-packets}), using the construction specified in Lemma~\ref{lem:linear_3} (Appendix, Section~\ref{app:1}).
She reliably  broadcasts the coefficients she used to create all the $s$-packets. 
\item
Each terminal $T_{i\neq 0}$ reconstructs all the $s$-packets. 
\end{enumerate}

\smallskip
At this point, Alice shares the same $L = \min_i M_i$ $s$-packets with each terminal $T_i$.
The common secret is the concatenation of these $s$-packets,
$\key = \langle s_1, \ldots, s_L \rangle$, and the protocol terminates.

\smallskip
\paragraph*{$y$-packet construction}
Alice identifies the $N^*$ $x$-packets that were received by at least one terminal.
She considers each subset of terminals $\mathcal{J}$, identifies the $N^\mathcal{J}$ $x$-packets that
were received by all the terminals in the subset but no other terminals, and
creates $\delta_E N^\mathcal{J}$ linear combinations of these packets using the construction specified
in Lemma~\ref{lem:linear_1} (Appendix, Section~\ref{app:1}).
As a result, she creates $\delta_E N^*$ linear combinations.
For an illustration, see Figure~\ref{fig_connections}.

\smallskip
\paragraph*{$z$-packet construction}
Alice chooses the $z$-packets such that: every terminal $T_{i\neq0}$ can combine $M-M_i$ $z$-packets with the
$M_i$ $y$-packets it reconstructed in step $4$ of the initial phase, and reconstruct all the $M$ $y$-packets;
choosing the $z$-packets can be done using standard network-coding techniques \cite{mono}.

\begin{figure*}[!t]
\begin{center}
\psset{unit=0.045in}
\begin{pspicture}(-12,-27)(130,115)

\psset{linewidth=1.0pt}

% First COLUMN
%==========================================================
\psline[linewidth=.1]{-}(-18,109)(35,109)
\psline[linewidth=.1]{-}(65,109)(136,109)
\rput(50,109){{\em  Initial Phase, Step 1}}
\rput(50,104){{\bf Alice} transmits $N$ $x$-packets}

\rput(0,99){{\bf Bob} receives}
\rput(0,94){$N_1 = |\mathcal{X}_1|+|\mathcal{X}_{12}|$ $x$-packets}  

\rput(40,99){{\bf Calvin} receives}
\rput(40,94){$N_2 = |\mathcal{X}_2|+|\mathcal{X}_{12}|$ $x$-packets}  

\rput(80,99){{\bf Eve} receives}
\rput(80,94){$N_E$ $x$-packets}  

\rput(8,80){$\mathcal{X}_1$}\rput(19,80){$\mathcal{X}_{12}$}\rput(30,80){$\mathcal{X}_2$}
\rput(2,68){\psellipse[linewidth=0.5mm,linestyle=dashed,hatchangle=0](12,12)(11,7)}
\rput(12,68){\psellipse[linewidth=0.5mm,hatchangle=30](12,12)(11,7)}

\rput(70,84){$\mathcal{X}_1$: $x$-packets received by Bob, not Calvin}
\rput(70,79){$\mathcal{X}_2$: $x$-packets received by Calvin, not Bob}
\rput(70,74){$\mathcal{X}_{12}$: $x$-packets received by Bob and Calvin}

%=========================================================================
\psline[linewidth=.1]{-}(-18,65)(35,65)
\psline[linewidth=.1]{-}(65,65)(136,65)
\rput(50,65){{\em  Initial Phase, Step 3}}
\rput(50,60){{\bf Alice} constructs $M$ $y$-packets,}
\rput(65,55){ $|\mathcal{Y}_1| = \delta_E  |\mathcal{X}_1|$ from $\mathcal{X}_1$,}
\rput(70,51){ $|\mathcal{Y}_2| = \delta_E  |\mathcal{X}_2|$ from $\mathcal{X}_2$,}
\rput(75,47){ $|\mathcal{Y}_{12}| = \delta_E  |\mathcal{X}_{12}|$ from $\mathcal{X}_{12}$.}

\psline[linecolor=darkgray,linestyle=dashed]{->}(8,72)(2,54)
\psline[linecolor=darkgray,linestyle=dashed]{->}(19,72)(17,54)
\psline[linecolor=darkgray,linestyle=dashed]{->}(28,72)(32,54)

\rput(-5,42){\psellipse[linewidth=0.5mm,linestyle=dashed,hatchangle=0](7,7)(6,4.5)}
\rput(10,42){\psellipse[linewidth=0.5mm,linestyle=dotted,hatchangle=0](7,7)(5,4)}
\rput(26,42){\psellipse[linewidth=0.5mm,linestyle=solid,hatchangle=0](7,7)(6,4.5)}

\rput(2,49){$\mathcal{Y}_1$}\rput(17,49){$\mathcal{Y}_{12}$}\rput(33,49){$\mathcal{Y}_2$}

%=========================================================================
\psline[linewidth=.1]{-}(-18,40)(35,40)
\psline[linewidth=.1]{-}(65,40)(136,40)
\rput(50,40){{\em  Initial Phase, Step 4}}

\rput(0,35){{\bf Bob} reconstructs}
\rput(0,30){$M_1 = \delta_E  N_1$ $y$-packets}  

\rput(40,35){{\bf Calvin} reconstructs}
\rput(40,30){$M_2 = \delta_E  N_2$ $y$-packets}  

\rput(80,35){{\bf Eve} reconstructs}
\rput(80,30){$M_E$ $y$-packets}

%=========================================================================
\psline[linewidth=.1]{-}(-18,22)(30,22)
\psline[linewidth=.1]{-}(70,22)(136,22)
\rput(50,22){{\em  Reconciliation Phase, Step 1}}
\rput(50,17){{\bf Alice} reliably broadcasts $M - M_1$ linear combinations of the $y$-packets}

\psline[linewidth=.1]{-}(-18,11)(30,11)
\psline[linewidth=.1]{-}(70,11)(136,11)
\rput(50,11){{\em  Reconciliation Phase, Step 2}}

\rput(0,6){{\bf Bob} reconstructs}
\rput(0,1){all the $M$ $y$-packets}  
%\psline{|-|}(-10,62)(28,62)

\rput(40,6){{\bf Calvin} reconstructs}
\rput(40,1){all the $M$ $y$-packets}  

\rput(80,6){{\bf Eve} reconstructs}
\rput(80,1){$M_E'$ $y$-packets}

%=========================================================================
\psline[linewidth=.1]{-}(-18,-5)(30,-5)
\psline[linewidth=.1]{-}(70,-5)(136,-5)
\rput(50,-5){{\em  Reconciliation Phase, Step 3}}
\rput(50,-10){{\bf Alice} constructs $M_1$ $s$-packets}

%=========================================================================
\psline[linewidth=.1]{-}(-18,-16)(30,-16)
\psline[linewidth=.1]{-}(70,-16)(136,-16)
\rput(50,-16){{\em  Reconciliation Phase, Step 4}}

\rput(0,-21){{\bf Bob} reconstructs}
\rput(0,-26){all the $M_1$ $s$-packets}  
%\psline{|-|}(-10,62)(28,62)

\rput(40,-21){{\bf Calvin} reconstructs}
\rput(40,-26){all the $M_1$ $s$-packets}  

\rput(80,-21){{\bf Eve} reconstructs}
\rput(80,-26){$M_E$ $s$-packets}

%==================================================================================================  
\psline[linewidth=.1]{-}(100,115)(100,-27)

%===============================================================
% SECOND COLUMN
%===============================================================
\rput(120,113){Parameter values}

%====================================================================
\rput(120,93){$N_1 = N_2 \rightarrow (1-\delta)  N$ } 
\rput(120,88){$N_E \rightarrow (1-\delta_E)  N$ } 

%====================================================================
\rput(120,60){ $M= \delta_E  (1 - \delta^2)  N$ }

%====================================================================
\rput(120,35){$M_1= M_2 = \delta_E N_1$}
\rput(120,30){$\rightarrow \delta_E  (1 - \delta)  N$}
\rput(120,25){$M_E \rightarrow 0$}

%====================================================================

\rput(120,1){$M_E' \rightarrow M - M_1$}
\rput(120,-10){$M_1 \rightarrow \delta_E  (1 - \delta)  N$}
\rput(120,-25){$M_E \rightarrow 0$}

%====================================================================

\end{pspicture}
\end{center} 
\caption{Alice, Bob, and Calvin establish a common secret $\key$ in the presence of passive adversary Eve.}
\label{fig_connections}
\vspace{-0.3cm}
\end{figure*}

\subsection{An Example Agreement}

We will now illustrate the role of each step through a simple example (Figure~\ref{fig_connections}):
Alice wants to create a common secret with $2$ other nodes, Bob and Calvin;
both the Alice-Bob channel and the Alice-Calvin channel have the same erasure probability $\delta_1 = \delta_2 = \delta$;
the Alice-Eve channel has erasure probability $\delta_E$.
Assume that Eve is a passive adversary, i.e., she never performs any transmissions.

In step 1 of the initial phase, Alice transmits $N$ $x$-packets.  Assume that $N$ is large enough
that, at the end of this step, Bob has received $N_1 \rightarrow (1 - \delta) N$ $x$-packets,
Calvin has received $N_2 \rightarrow (1 - \delta) N$ $x$-packets,
and Bob and Calvin together have received
$N^* \rightarrow (1 - \delta^2) N$ $x$-packets.

In step 3 of the initial phase, Alice creates the $y$-packets, which encode all the secret information that
is shared, at this point, by Alice and each terminal separately:
She creates $M = \delta_E (1 - \delta^2) N$ $y$-packets.
Among these, there are $M_1 = \delta_E N_1$ $y$-packets, which are linear combinations of the $N_1$ $x$-packets received by Bob;
these encode all the information that is shared by Alice and Bob but not Eve.
Similarly, there are $M_2 \approx M_1$ $y$-packets, which are linear combinations of the $N_2 \approx N_1$ $x$-packets received by Calvin;
these encode all the information that is shared by Alice and Calvin but not Eve.
So, at the end of the initial phase, Alice shares $M_1$ secret packets with Bob and $M_1$ (different) secret packets with Calvin.

In the reconciliation phase, Alice first tries to reach a point where she shares with both Bob and Calvin all the $M$ $y$-packets.
To this end, in step 1 of the reconciliation phase, Alice creates the $z$-packets, which
encode the difference between what Bob and Calvin already know and what Alice wants them to know:
she creates and reliably broadcasts $M - M_1$ linear combinations of the $y$-packets.
In step 2 of the reconciliation phase,
Bob combines the $M_1$ $y$-packets he already knows with the $M - M_1$ linear combinations of the $y$-packets
broadcast by Alice and reconstructs all the $M$ $y$-packets (and Calvin does the same).

Now consider Eve:
Assume that $N$ is large enough that,
at the end of the initial phase, Eve has received $M_E \rightarrow (1 - \delta_E) N^*$ of the $x$-packets received by 
either Bob or Calvin or both.
Hence, Eve cannot reconstruct any of the $M = \delta_E N^*$ $y$-packets.
In step 1 of the reconciliation phase, Alice reliably broadcasts $M - M_1$ linear combinations of the $y$-packets 
(in order to fill in Bob's and Calvin's missing information).
Eve also receives this broadcast.
Hence, at the end of step 2 of the reconciliation phase, Eve can reconstruct $M - M_1$ of the $y$-packets.

Based on what we have said so far, at the end of step 2 of the reconciliation phase,
Bob and Calvin know all $M$ $y$-packets, while Eve knows $M - M_1$ $y$-packets.
Hence, in step 3 of the reconciliation phase, Alice creates the $s$-packets, 
which encode all the secret information that is shared, at this point, by Alice, Bob, and Calvin:
she creates $M_1$ $s$-packets, which are linear combinations of all the $M$ $y$-packets.
The concatenation of the $M_1$ $s$-packets is the common secret $\key$.

To recap, at the end of the initial phase, Alice shares $M_1$ different secret packets with each of Bob and Calvin,
whereas at the end of the reconciliation phase, she shares $M_1$ \emph{common} secret packets with both of them.
So, the reconciliation phase does not increase the amount of secret information shared by Alice and each terminal,
but ``redistributes'' information, such that Alice ends up sharing the same secret information with all the terminals.

\subsection{Discussion of Key Points}

The size of the established common secret is $L = \min_i M_i = \delta_E \min_i N_i = \delta_E (1 - \max_i \delta_i) N$:
the minimum number of $x$-packets that are received by a terminal but not Eve.
In other words, the size of the established common secret is determined by the \emph{weakest} terminal,
i.e., the one that shares the least amount of secret information with Alice at the end of the initial phase. 
This means that, if Alice can establish a secret of size $L$ with Bob, and then we add another terminal,
Calvin, who has better connectivity to Alice than Bob, then the three terminals can establish a common
secret of the same size $L$---i.e., adding Calvin to the group will not decrease the size of the established secret.

So, increasing the number of terminals from $2$ to $\numnodes$ does
not necessarily decrease the size of the established common secret.
For instance, if all the terminals have identical channels ($\delta_i
= \delta, \forall i$), then the size of the established common secret
is $L = \delta_E (1 - \delta) N$, which is equal to the size of the
secret established between two terminals in the example of
Section~\ref{sec:setup:main}.  
Moreover, the extra transmissions that Alice has to make in the reconciliation phase are 
$M - \delta_E (1 - \delta) N \le N - \delta_E (1 - \delta) N$, 
i.e., $M$ is upper-bounded by a constant that is independent of $\numnodes$.
%since $M=\delta_E(1-\delta^n)N\leq N$. 
As we will see in the analysis
section, this independence from $\numnodes$ is key to the scalability
of our protocol.

\iffalse
\SDnote{Note that our protocol is secure for every instance (see
  analysis in Section \ref{sec:analysis}) for given parameters
  $N_1,N^*,M,M_1,M_2$. Channel conditioning described in Section
  \ref{sec:jam} enables this.}
\fi

Linear coding is used to two different effects by our protocol:
(1) As a means to fill in the information missing from each terminal by transmitting the minimum number of packets:
In step 1 of the reconciliation phase, the $M - \min_i M_i$ $z$-packets are linear combinations of the $M$ $y$-packets;
given that each terminal already knows at least $\min_i M_i$ $y$-packets from the initial phase,
it can reconstruct all $M$ $y$-packets.
(2) As a means to perform privacy amplification, i.e., combine all the information known to the terminals but not to Eve:
In step 3 of the initial phase, Alice creates $\delta_E N^*$ $y$-packets that are linear combinations of the 
$N^*$ $x$-packets received by at least one terminal;
assuming Eve has missed $\delta_E N^*$ of these $x$-packets, she cannot reconstruct any of the $y$-packets.
Similarly, in step 3 of the reconciliation phase, Alice creates $\min_i M_i$ $s$-packets that are linear combinations 
of the $M$ $y$-packets known to all the terminals; assuming Eve has missed $\min_i M_i$ $y$-packets, 
she cannot reconstruct any of the $s$-packets.

Point (2) assumes that $N$ (hence also $N^*$) is large enough that, if the Alice-Eve channel has erasure probability $\delta_E$,
then Eve misses close to $\delta_E N^*$ of the $N^*$ $x$-packets.
Of course, it is theoretically possible that Eve gets lucky and receives significantly more $x$-packets than expected;
however, by picking the right value for $N$, we can make this event arbitrarily unlikely
(e.g., as likely as Eve correctly guessing the value of the secret $\key$ by randomly picking a number between $0$ and $|\key|$).
In Section~\ref{sec:jam}, where we present our experimental results, we show exactly how much information
Eve collects about every generated common secret $\key$.

\iffalse
(1) It is optimal for $m=2$ nodes\footnote{We suspect that it is optimal for $m>2$ nodes as well, but we do not know that,
as we do not have a matching upper bound for $m>2$ nodes.}:
Alice transmits $N$ packets and creates a secret key of length corresponding to $\deltaE \cdot (1 - \delta) \cdot N$ packets.
Hence, if we ignore the feedback message sent by Bob to Alice, the efficiency of our key exchange is $\deltaE \cdot (1 - \delta)$,
which is equal to the upper bound computed by Maurer.

(2) Our key exchange creates the same number of common secret bits 
\emph{independently of the number of honest nodes} $m$:
In the initial phase, Alice transmits $N$ packets and creates 
$M_i = \deltaE \cdot (1 - \delta_i) \cdot N$ secret packets with 
each node $T_i$.
Intuitively, this means that the maximum number of \emph{common} 
secret packets that Alice can ever hope to exchange with \emph{all}
honest nodes is $\min_i M_i$ (since by transmitting additional 
information Alice will equally improve each legitimate node's 
and Eve's knowledge).
In the reconciliation phase, Alice creates exactly $L = \min_i M_i$ 
common secret packets with all legitimate nodes.
So, no matter how many honest nodes $m$ there are, our key exchange 
creates $L = \min_i M_i$ common secret packets between them.
\fi

%=======================================================================================================================================
\section{Protocol Analysis}
\label{sec:analysis}
%================================

In this section, we state certain properties of the Basic secret-agreement protocol and also 
present a formal argument on why we chose this particular protocol over a more obvious alternative. 

\begin{lemma}
\label{lem:secrecy}
If the theoretical network conditions hold, there exists a sufficiently large $N$ for which
the Basic secret-agreement protocol is information-theoretically secure against a passive adversary.
\end{lemma}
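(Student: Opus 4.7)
The plan is to upper-bound what Eve learns about the secret $\mathcal{K}=\langle s_1,\dots,s_L\rangle$ given her full observation $\mathcal{O}_E$: the subset of $x$-packets she overhears in the initial phase, the terminals' feedback messages, all coefficient descriptions of the $y$-, $z$-, and $s$-packets (reliably broadcast), and the contents of the $z$-packets themselves. The goal is to show that for $N$ large enough, $H(\mathcal{K}\mid\mathcal{O}_E)\to H(\mathcal{K})$.

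First I would carry out a concentration step. For each subset $\mathcal{J}\subseteq\{T_0,\dots,T_{\numnodes-1}\}$ let $N^{\mathcal{J}}$ denote the number of $x$-packets received by exactly the terminals in $\mathcal{J}$ (and by none outside of it). By theoretical condition~1, Eve's erasures are independent of the terminals', so a Hoeffding bound combined with a union bound over the $2^{\numnodes}$ subsets gives that, for any prescribed failure probability and any $\epsilon>0$, once $N$ is large enough, with overwhelming probability Eve misses at least $(\delta_E-\epsilon)N^{\mathcal{J}}$ of the $x$-packets in each $\mathcal{X}^{\mathcal{J}}$ simultaneously. Summed over $\mathcal{J}$, Eve misses essentially $\delta_E N^{*}$ of the $x$-packets that are used in $y$-packet construction.

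Next I would invoke Lemma~\ref{lem:linear_1}: within each $\mathcal{X}^{\mathcal{J}}$, Alice's $\delta_E N^{\mathcal{J}}$ $y$-packets are built so that, whenever Eve is missing at least that many of the underlying $x$-packets, the resulting $y$-packets are uniform and independent of Eve's portion of $\mathcal{X}^{\mathcal{J}}$. Applying this per-subset and concatenating across all $\mathcal{J}$, the full vector of $M=\delta_E N^{*}$ $y$-packets is uniform from Eve's viewpoint at the end of the initial phase, even given the broadcast coefficients. In the reconciliation phase, the $z$-packets disclose exactly $M-L$ linear combinations of the $y$-packets, and Lemma~\ref{lem:linear_3} chooses the $s$-packets so that their coefficient vectors, together with those of the $z$-packets, form a full-rank system in the $y$-space. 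Equivalently, the $s$-coordinate functionals are linearly independent of the $z$-coordinate functionals, so conditioning on the $z$-packets leaves the $s$-packets uniform, yielding $H(\mathcal{K}\mid\mathcal{O}_E)=H(\mathcal{K})$.

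The hard part is gluing the concentration and the linear-algebraic arguments together cleanly. Two places require care: first, the simultaneous deviation bound over all $2^{\numnodes}$ subsets must be shown to remain small (the union bound works provided $N$ grows fast enough compared to $2^{\numnodes}$, and one must also ensure that $\delta_E N^{\mathcal{J}}$ is an admissible integer for the Lemma~\ref{lem:linear_1} construction); second, one must verify that Lemma~\ref{lem:linear_3}'s $s$-packet construction is compatible with whatever $M-L$ linear combinations Alice happens to broadcast as $z$-packets, so that publishing the $z$-packets really does not leak anything about the $s$-packets. Once these two points are pinned down, the remainder is a standard linear-extractor / privacy-amplification argument applied to a uniform source.
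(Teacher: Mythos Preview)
Your proposal is correct and follows essentially the same route as the paper: invoke Lemma~\ref{lem:linear_1} per subset $\mathcal{J}$ to show the $y$-packets are uniform to Eve once she misses the predicted number of $x$-packets, invoke Lemma~\ref{lem:linear_3} to show the $s$-packets remain uniform after the $z$-packets are published, and use a Chernoff/Hoeffding concentration (the paper's Lemma~\ref{lemma_conc}) to guarantee the required counts for large $N$. If anything, you are more explicit than the paper about the union bound over the $2^{\numnodes}$ subsets and the integrality/compatibility issues; the paper's own proof is a terse two-sentence pointer to Lemmas~\ref{lem:linear_1}, \ref{lem:linear_3}, and~\ref{lemma_conc}.
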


\begin{lemma}
\label{lem:efficiency}
If the theoretical network conditions hold, there exists a sufficiently large $N$ for which
the Basic secret-agreement protocol achieves efficiency
$$\eff = \frac{\delta_E (1-\delta)}{1+\delta_E(\delta-\Delta)},$$
where
$\delta = \max_i\{\delta_i\}$ 
and $\Delta = \delta_1 \delta_2 \ldots \delta_{\numnodes-1}$.
\end{lemma}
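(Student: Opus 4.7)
The plan is to compute the numerator (secret size) and denominator (total transmitted bits) of $\eff$ asymptotically in $N$, using concentration of binomial random variables to replace the random counts $N^*, N_i, M, M_i$ with their expectations. First, I would apply Chernoff/Hoeffding bounds to show that, for any $\varepsilon > 0$, with probability $1 - o(1)$ as $N \to \infty$: $N_i = (1-\delta_i)N \pm \varepsilon N$ (the $x$-packets received by $T_i$), and $N^* = (1 - \prod_{i=1}^{n-1}\delta_i)N \pm \varepsilon N = (1-\Delta)N \pm \varepsilon N$ (the $x$-packets received by at least one terminal, since a packet is missed by everyone with probability $\Delta$). By the $y$-packet construction, $M = \delta_E N^* \to \delta_E(1-\Delta)N$ and $M_i = \delta_E N_i \to \delta_E(1-\delta_i)N$, so $\min_i M_i \to \delta_E(1-\delta)N$, where $\delta = \max_i \delta_i$.

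Next I would account for every transmission in the protocol. The payload-carrying transmissions are: (i) Alice's $N$ $x$-packet broadcasts in step~1 of the initial phase, and (ii) Alice's $M - \min_i M_i$ reliably-broadcast $z$-packets in step~1 of the reconciliation phase; all other broadcasts (the feedback messages in step~2 of the initial phase, and the coefficient vectors attached to the $y$-, $z$-, and $s$-packets) consist only of packet indices or field elements, so their bit-cost is $O(N \log N)$ while the payload bits are $\Theta(NP)$ for packet length $P$. Taking the payload length large enough (or equivalently taking the asymptotic ``bits per packet'' normalization that underlies $\eff$) makes this overhead vanish. Using the expressions above, the number of $z$-packets is
$$M - \min_i M_i \;\to\; \delta_E(1-\Delta)N - \delta_E(1-\delta)N \;=\; \delta_E(\delta - \Delta)N,$$
so the denominator converges to $\bigl(1 + \delta_E(\delta-\Delta)\bigr)N$ packets while the numerator, which by Lemma~\ref{lem:secrecy} (the security guarantee) equals the length of $\key = \langle s_1, \dots, s_L \rangle$ with $L = \min_i M_i$, converges to $\delta_E(1-\delta)N$ packets. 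Dividing gives exactly the claimed ratio.

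The main obstacle I foresee is the bookkeeping for the auxiliary transmissions: one must argue carefully that the reliable broadcasts of the feedback in step~2 and of the coefficient vectors in steps~3 of both phases are lower-order, since in a literal interpretation of ``transmitted bits'' they are not free. This can be handled either by fixing the packet payload length $P$ and letting $N,P \to \infty$ with $P$ dominating $\log N$ and $\log |\mathbb{F}_{2^s}|$, or by absorbing these $O(\log)$ factors into an $o(1)$ additive loss in $\eff$. The concentration step for $N^*$ is a minor subtlety because its indicator events (packet received by at least one terminal) are functions of independent channel realizations across terminals, so Hoeffding applies directly; similarly, the $y$-packet construction's partition of $N^*$ into subset counts $N^{\mathcal{J}}$ only uses each $x$-packet once, so the counts $M, M_i$ concentrate as claimed and the ratio above is the asymptotic efficiency.
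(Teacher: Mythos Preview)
Your proposal is correct and follows essentially the same approach as the paper: compute $N^*\to(1-\Delta)N$, $\min_i N_i\to(1-\delta)N$, hence $M\to\delta_E(1-\Delta)N$ and $L=\min_i M_i\to\delta_E(1-\delta)N$, count only the $N$ $x$-packets plus the $M-\min_i M_i$ $z$-packets as transmitted payload, and take the ratio. The paper treats the feedback and coefficient broadcasts as negligible by assumption and relegates the concentration argument to a separate lemma, whereas you spell both out explicitly, but the substance is identical.
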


To give a sense of the achieved efficiency, we consider the case where $\delta_i = \delta_E$
(all the channels between Alice and any node are identical) and plot, in Figure~\ref{fig:SecRateComp} (solid lines), the efficiency 
of our protocol as a function of the erasure probability of the channels, for different values of the number of terminals $\numnodes$.
The bell-shape of the curve is explained as follows: when the erasure probability is $0$,
Eve misses none of the packets transmitted by Alice, which means that Alice cannot establish any secret
with the other terminals, no matter how many packets she transmits---hence, the efficiency of the protocol is $0$;
when the erasure probability is $1$, Eve misses all of the packets transmitted by Alice,
but so do all the other terminals, so, again, the efficiency of the protocol is $0$;
the maximum efficiency is achieved somewhere in between (at erasure probability $0.5$ when we have $\numnodes = 2$ terminals,
and at $\sqrt{2}-1$ when $\numnodes\rightarrow\infty$).
The shift of the maximum point is due to the additional (by at most $N - \min_i M_i$) transmissions performed in the reconciliation phase.
Note that, as the number of terminals goes to infinity, the maximum efficiency of our protocol approaches $20$\%---a substantial 
non-zero value.

\begin{figure}[!t]
\begin{center}
 \includegraphics[width=3.5in,height=1.6in]{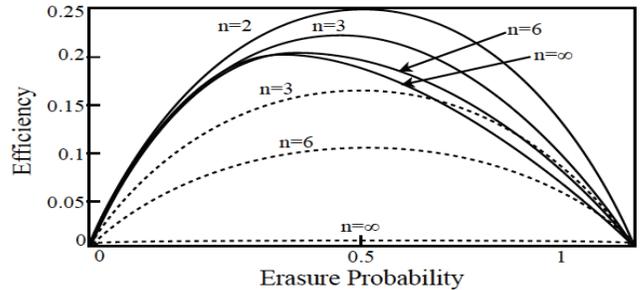}
\end{center}
\vspace{-1em}
\caption{
Efficiency of secret agreement as a function of the erasure probability of the channels
(assuming identical erasure channels) for our protocol (solid lines) and the alternative protocol (dashed lines).} 
\label{fig:SecRateComp}
\vspace{-0.5cm}
\end{figure} 

\iffalse
Theorem~\ref{thm:achv_secrecy_rate} states a surprising result:
no matter how many nodes $m$ we have, when our key agreement concludes, we
will have created {\em the same number of secret packets} among the
nodes, equal to $M_T\triangleq N(1-\delta)\deltaE$. 
In other words, when Alice broadcasts to an arbitrary number of receivers, the fact that each of them observes a different subset of the transmitted
packets does not reduce, as one may expect, the number of common secret packets we can create between Alice and all the receivers\footnote{Underlying this result is the same
  mechanism which makes multicasting with network coding work
  \cite{mono,ITAtalk}. In that case, when multicasting to an arbitrary
  number $m$ of receivers, we can achieve the same rate as if we had a
  single receiver utilizing all the network resources. Similarly here,
  we establish the same number of secret packets, as if we had a
  single receiver.}.  
Moreover, we will see that Alice in the
reconciliation phase will at most need to make $N-M_T$ additional
transmissions, no matter how many terminals we have.
\fi

The Basic protocol scales well with the number of terminals because we try to leverage broadcasting as much as possible.
If we were, instead, to require pairwise secret establishment between Alice and each terminal,
efficiency would quickly go to $0$ with the number of terminals.
To see this, consider the following, conceptually simpler alternative to the Basic protocol:
Alice establishes a separate secret $\key_i$ with each other terminal $T_i$ (using the initial phase of the Basic protocol)
and uses $\key_i$ to convey a common secret $\key$ to each $T_i$
(e.g., reliability broadcasts $\mathcal{M}_i = \key \oplus \key_i$ for all $i \neq 0$).
Its efficiency is
$\eff^{\text{(alt)}} = \frac{\delta_E (1-\delta) }{1+(\numnodes-2)\delta_E (1-\delta) },$
where $\delta = \max_i \delta_i$.
Figure~\ref{fig:SecRateComp} shows this efficiency (dashed lines),
plotted against the efficiency of the Basic protocol (solid lines), as a function of the erasure probability of the channels,
assuming all channels are identical.
Notice that, unlike the efficiency of our protocol,
$\eff^{\text{(alt)}}$ quickly goes to zero as the number of terminals $\numnodes$ increases.

\begin{lemma}
\label{lem:optimality}
If the theoretical network conditions hold,
then, for $\numnodes = 2$ terminals, the Basic protocol achieves maximum efficiency.
\end{lemma}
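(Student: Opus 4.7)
The plan is to reduce this to Lemma~\ref{lem:efficiency} combined with Maurer's upper bound, both of which are already in hand. Specifically, I would specialize the efficiency formula of Lemma~\ref{lem:efficiency} to the case $\numnodes = 2$, and then observe that it matches the known information-theoretic upper bound cited earlier in the paper (namely $\delta_E(1-\delta_1)$ from \cite{Ma-IT93}).

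First, I would observe that when $\numnodes = 2$ there is only one non-source terminal, Bob, so $\delta = \max_i \delta_i = \delta_1$ and $\Delta = \delta_1$ as well (the product over $i = 1, \dots, \numnodes - 1$ reduces to a single factor). Consequently $\delta - \Delta = 0$, and the expression from Lemma~\ref{lem:efficiency} collapses to
\[
\eff \;=\; \frac{\delta_E(1-\delta_1)}{1+\delta_E\cdot 0} \;=\; \delta_E(1-\delta_1).
\]
One can also verify this directly from the protocol description: for $\numnodes = 2$ the reconciliation phase is skipped, Alice transmits $N$ $x$-packets, and the secret consists of $M_1 = \delta_E N_1 \to \delta_E(1-\delta_1)N$ $y$-packets, so in the large-$N$ regime the overhead of the feedback message and of broadcasting the coefficients of the $M$ $y$-packets is negligible relative to $N$.

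Next, I would invoke Maurer's upper bound (explicitly stated in the \textbf{Efficiency} paragraph of Section~\ref{sec:setup}), which states that, under the theoretical network conditions and for $\numnodes = 2$ terminals, no secret-agreement protocol can achieve efficiency strictly greater than $\delta_E(1-\delta_1)$. Since the Basic protocol attains this value, it is efficiency-optimal.

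The only subtlety, and the place where one must be slightly careful, is the claim that the auxiliary bits (Bob's feedback specifying which $x$-packets he received, plus the coefficients describing the $y$-packets) contribute vanishing overhead as $N \to \infty$. These messages carry on the order of $N \log N$ or $N^2$ \emph{bits}, while an $x$-packet is assumed to have a packet-size payload (many bits); choosing the payload size to grow suitably with $N$, or equivalently amortizing over many rounds, makes the ratio of auxiliary bits to payload bits go to zero. With that caveat handled, the two ingredients—Lemma~\ref{lem:efficiency} specialized to $\numnodes=2$ and Maurer's matching converse—close the argument.
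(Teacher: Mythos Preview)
Your approach is correct and essentially identical to the paper's: the paper derives the lemma directly from Lemma~\ref{lem:efficiency} by noting that for $\numnodes=2$ one gets $\eff=\delta_E(1-\delta_1)$, which coincides with Maurer's upper bound~\cite{Ma-IT93}. Your additional remarks on the vanishing overhead of feedback and coefficient broadcasts are more detailed than what the paper spells out, but consistent with its standing assumption that such auxiliary traffic is negligible.
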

This is directly derived from Lemma~\ref{lem:efficiency}:
for $\numnodes =2$ terminals, we achieve efficiency $E = \delta_E (1 - \delta_1)$,
which is the maximum possible \cite{Ma-IT93}.

\begin{lemma}
\label{lem:complexity}
Each terminal that participates in the Basic protocol  
executes an algorithm that is polynomial in $N$.
\end{lemma}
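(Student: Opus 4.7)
The plan is to walk through the Basic protocol step by step and verify that every operation performed by any terminal—whether Alice, who does the construction, or $T_{i\neq 0}$, who do the reconstruction—can be carried out in time polynomial in $N$, using the fact that $M = \delta_E N^* \le N$ and $L = \min_i M_i \le M \le N$, and that the packet size is fixed (or at most polynomial in $N$).

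First I would handle Alice's side. In step~1 of the initial phase, Alice simply generates and transmits $N$ packets, which is $O(N)$. In step~3, Alice partitions the $N^*$ received $x$-packets into at most $2^{\numnodes}-1$ subsets indexed by $\mathcal{J}$ and invokes the construction of Lemma~\ref{lem:linear_1} on each subset to produce $\delta_E N^\mathcal{J}$ linear combinations; since this construction is a linear-algebraic procedure over $\mathbb{F}_{2^s}$ (essentially picking coefficient vectors and taking inner products with packet contents), it runs in time polynomial in $N^\mathcal{J}\le N$. Aggregating over the subsets gives polynomial cost in $N$ (the factor $2^{\numnodes}$ is a constant for fixed $\numnodes$, or can be absorbed if one is careful about how the subsets are enumerated in practice). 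In step~1 of the reconciliation phase, Alice computes $M-\min_i M_i$ network-coded combinations of the $y$-packets; standard network-coding constructions (e.g.\ Jaggi--Sanders or random linear network coding with verification) run in time polynomial in the number of packets and receivers. In step~3, Alice invokes the construction of Lemma~\ref{lem:linear_3} to produce $L=\min_i M_i$ $s$-packets, again a polynomial-time linear construction.

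Next I would handle each receiving terminal $T_{i\neq 0}$. Sending the feedback message in step~2 of the initial phase is $O(N)$ since it is just a bitmap of received indices. Reconstructing the $M_i$ $y$-packets in step~4 amounts to evaluating known linear combinations of known $x$-packets—$O(M_i \cdot N)$ symbol operations—plus, if needed, a linear-algebraic check that all coefficients of the required combinations are supported on $x$-packets the terminal actually holds (which is the very condition guaranteeing reconstructability). In step~2 of the reconciliation phase, $T_i$ combines its $M_i$ known $y$-packets with $M-M_i$ received $z$-packets and performs Gaussian elimination on an $M\times M$ coefficient matrix over $\mathbb{F}_{2^s}$ to recover all $M$ $y$-packets; Gaussian elimination is $O(M^3)$ field operations, hence polynomial in $N$. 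Finally, reconstructing the $L$ $s$-packets in step~4 is another $O(L\cdot M)$ evaluation of linear combinations with known coefficients.

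Summing across all steps, each terminal executes at most $\mathrm{poly}(N)$ field operations, and each field operation runs in time polynomial in the packet size. The main thing to check carefully—and the only subtlety I would flag as the potential obstacle—is that the constructions invoked from Lemma~\ref{lem:linear_1} and Lemma~\ref{lem:linear_3} are themselves polynomial-time (rather than merely existence results). Assuming those lemmas are stated constructively (as the protocol's use of them requires), the overall complexity bound follows immediately.
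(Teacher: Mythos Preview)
Your proposal is correct and follows essentially the same step-by-step approach as the paper's proof, which is terser: it treats only Alice explicitly (noting the other terminals do strictly less work) and bounds her three main construction steps by $N^3$, $(N-L)N^2$, and $LN^2$ field operations respectively, for a total of at most $2N^3$. The constructiveness concern you flag for Lemmas~\ref{lem:linear_1} and~\ref{lem:linear_3} is resolved in the paper by using Reed--Solomon generator matrices and a standard basis-extension method, both polynomial-time.
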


The proofs of Lemmas~\ref{lem:secrecy} and~\ref{lem:complexity} are in the Appendix, Section~\ref{app:1}.
We omit the proof of Lemma~\ref{lem:efficiency}, which is straightforward.

\iffalse
\begin{enumerate}
\item
Using our secret-agreement protocol from Section~\ref{sec:protocol} for $\numnodes=2$ terminals, 
Alice exchanges a different secret $\mathcal{K}_i$ with each terminal $T_i$.
\item 
Alice selects the shortest of these pairwise secrets to be the common secret $\mathcal{K}$.
Without loss of generality, assume that the shortest pairwise secret is $\mathcal{K}_{\numnodes-1}$.
\item 
Alice reliably broadcasts $\numnodes-2$ sets of packets: the $i$-th set communicates a message
$\mathcal{M}_i = \mathcal{K} \oplus \mathcal{K}_i$, $i=1,\ldots,\numnodes-2$. 
\item 
Each terminal $T_i, \, i \neq \numnodes - 1,$ obtains the common secret $\key$ by computing $\mathcal{M}_i \oplus \mathcal{K}_i$.
Terminal $T_{\numnodes -1}$ already has $\key$.
\end{enumerate}
\fi

\iffalse
in our protocol, increasing the number of nodes from $m$ to $m^\prime$ only requires  Alice to make 
$\delta\delta_E(\delta^{m-2}-\delta^{m^\prime-2}) N$   additional  
reliable-broadcast transmissions of $y$-packets during the reconciliation phase;
this number, however, goes to zero exponentially fast in $m$ and $m^\prime$.
\fi

\section{Authentication}
\label{sec:authenticate}
%==========================

%\SDnote{[Two minor points could be added: group dynanics, (i) when members join
%or leave (ii) if users operate incorrectly, sending wrong ACKs. For (i)
%leaving is easy, joining, we need to redo protocol? For (ii) clearly
%we can still generate group key for rest secure from eavesdropper, but we
%leak to misbehaving node.]}\Xnote{I added a sentence at the end.}

The Basic protocol is information-theoretically secure against passive adversaries,
but is vulnerable to active attacks: what if Eve impersonates a terminal, participates in the protocol,
and learns the common secret? 

To protect against impersonation attacks, the (true) terminals need to share an
initial common secret $\initkey$, of sufficient size to authenticate each other until 
they successfully complete one round of the Basic protocol;
once they have successfully completed a round
(and generated a new common secret $\key$ with $|\key| \gg |\initkey|$),
they can use a part of $\key$ to authenticate each other until the next successful round completion.

Since we are aiming for information-theoretic security,
the terminals use an unconditionally secure authentication code~\cite{stinson}.
Such a code provides a function $\auth(\mu, \initkey)$, which returns an authenticator $\alpha$ for message $\mu$ given key $\initkey$,
such that: an entity that does not know $\initkey$ can
generate a valid message/authenticator pair (launch a successful impersonation attack)
with probability $\frac{1}{|\mathcal{A}|}$, where $|\mathcal{A}|$ is the size of the authenticator space.

Choosing at which step(s) of the Basic protocol to perform the authentication involves a trade-off between efficiency
and the adversary model that we want to consider:
At one extreme, each terminal appends an authentication code to every single packet it transmits or reliability broadcasts.
At the other extreme, the terminals authenticate each other only at the last step of the reconciliation phase
(i.e., each terminal obtains proof that all the other terminals with which it has created a common secret know the initial
common secret $\initkey$).
The former requires a significantly larger $\initkey$ to provide information-theoretic guarantees (because it reveals many more
authenticators to the adversary).
The latter makes the protocol vulnerable to a simple denial-of-service attack: in every protocol round, Eve impersonates a terminal
and learns the common secret; at the end of the reconciliation phase, she fails to authenticate herself to the (true) terminals,
causing the common secret to be discarded and the protocol to restart.

We chose a solution in the middle, which, in our opinion, offers a good balance:
authentication happens at the end of the initial phase, after step $4$:

\begin{itemize}
\item[4-A]
Alice performs \emph{pair-wise authentication} (described below) with each terminal $T_{i \neq 0}$,
using the $y$-packets that $T_i$ has reconstructed and the initial common secret $\initkey$.
If Alice fails to authenticate herself to $T_i$, $T_i$ stops participating in the protocol.
If $T_i$ fails to authenticate itself to Alice, Alice excludes $T_i$ from the agreement and
discards all the $y$-packets that $T_i$ has reconstructed.
\end{itemize}
Pair-wise authentication consists of the following steps:
\begin{enumerate}
\item
Alice concatenates $W$ bits selected from the $y$-packets that $T_i$ has reconstructed in the initial phase, step 4,
creates a message $\mu_i$ that contains this concatenation,
and reliably broadcasts $\alpha_i = \auth(\mu_i, \initkey)$.
%\footnote{A simple option is for  $\auth(\mu_i, \initkey)$ to be the one-time pad function, namely $\mu_i\oplus\initkey$.}
\item
$T_i$ creates a message $\mu_i'$ in the same way and checks whether $\alpha_i = \auth(\mu_i', \initkey)$.
If yes, Alice has successfully authenticated herself to $T_i$, otherwise, she has failed.
\item
$T_i$ concatenates a different set of $W$ bits from the $y$-packets that it has reconstructed in the initial phase, step 4,
creates a message $\nu_i$ that contains this concatenation,
and reliably broadcasts $\beta_i = \auth(\nu_i, \initkey)$.
\item
Alice creates a message $\nu_i'$ in the same way and checks whether $\beta_i = \auth(\nu_i', \initkey)$.
If yes, $T_i$ has successfully authenticated itself to Alice, otherwise, it has failed.
\end{enumerate}
%=====================================================

Regarding the size of the relevant parameters: 
Since we are using an unconditionally secure authentication code, 
Eve can launch a successful impersonation attack with probability $\frac{1}{|\initkey|}$;
for $|\initkey| = 32$ bits, this becomes $0.232 \cdot 10^{-9}$.
We authenticate messages of size $W$;
unconditionally secure authentication codes require a key of twice the size of the authenticated message, 
hence $W = |\initkey|/2 = 16$ bits.

Now suppose that Eve is an active adversary.
First, she impersonates a terminal other than Alice.
In this case, Eve fails to authenticate herself to Alice (initial phase, step $4$-A) because she does not know $\initkey$,
causing Alice to exclude her from the agreement.
This means that Alice discards all the $y$-packets that Eve has reconstructed, hence,
at the end of the initial phase, Eve cannot reconstruct any of the (remaining) $y$-packets,
i.e., she is in the same position as a passive adversary.
Second, suppose that Eve impersonates Alice.
In this case, Eve fails to authenticate herself to any of the other terminals (initial phase, step $4$-A) because she
does not know $\initkey$, causing the terminals to stop talking to her.

\section{Adapting to a Real Network}
\label{sec:real}

In this section, we describe how we adapt our secret-agreement protocol 
to a small wireless testbed ($14$ m$^2$), where the theoretical network conditions do not hold.
Instead, we use interferers to introduce noise and ensure that an adversary does not receive all 
the packets received by any terminal, as long as she has a minimum physical distance ($1.76$ m)
from each terminal.

\subsection{Setup and First Try}
\label{sec:real:setup}

Our testbed (Figure~\ref{fig:testbed}) covers a square area of $14$
m$^2$.  We use $\numintf = 6$ interferers, which are WARP nodes
\cite{WARPref}, each with two directional antennae, each with a narrow
$3$-dB $22$-degree beam.  We deploy up to $\numnodes = 8$ terminals
and one adversary, Eve, which are Asus WL-500gP wireless routers
running $802.11$g (at $2.472$ GHz, transmit power $3$ dBm) in ad-hoc
mode.  During our experiments, when a terminal transmits, it sends $100$-byte packets
at a rate of $1$ Mbps.  

We logically divide the testbed area in $3$ rows and $3$ columns of equal width,
place Eve in one of the $9$ logical cells, and the terminals in
various positions around her, but not in the same cell. Our rationale
is the following: if a group of wireless nodes want to exchange a
secret, it is reasonable to require from each of them to stand at
least some minimum distance away from any other wireless node. In our
testbed, this minimum distance is $1.76$ m (the diagonal of a logical
cell), and, as we explain below, it was determined by the shape of the
interferers' beams; with a narrower beam, we would have achieved a
smaller minimum distance.

We place the interfering antennae along the perimeter of the covered area ($3$ on each side);
we turn them on and off, such that, at any point in time, one pair of antennae creates noise along a row, 
while another pair creates noise along a column; since we have $9$ row/column
combinations, there are $9$ different noise patterns.  
To choose the width of the rows/columns, we performed a simple calibration using only two interfering antennae:
we placed the antennae at opposite ends of the first row;
we chose the width of the row to be the maximum width such that, when the antennae were on,
any wireless node located in the row did not receive any other signal transmitted from within the $14$ m$^2$ covered area.
Hence, the number of rows and columns was
determined by the shape of the interfering antennae's beams; a narrower beam
would have resulted in more rows and columns given the same area.

Each experiment is divided in time slots; at the beginning of each time slot, we turn on different interferers,
such that, by the end of the experiment, we have rotated through all $9$ noise patterns.

\begin{figure}[t!]
\begin{center}
\psset{unit=0.028in}
\begin{pspicture}(5,-2)(87,60)
\psset{linewidth=0.2mm}

% Lines
%\psline[linecolor=black,linestyle=solid,linewidth=0.2mm]{-}(0,0)(0,80)
\psline[linecolor=black,linestyle=solid,linewidth=0.2mm]{-}(20,0)(20,60)
\psline[linecolor=black,linestyle=solid,linewidth=0.2mm]{-}(40,0)(40,60)
\psline[linecolor=black,linestyle=solid,linewidth=0.2mm]{-}(60,0)(60,60)
\psline[linecolor=black,linestyle=solid,linewidth=0.2mm]{-}(80,0)(80,60)

\psline[linecolor=black,linestyle=solid,linewidth=0.2mm]{-}(20,0)(80,0)
\psline[linecolor=black,linestyle=solid,linewidth=0.2mm]{-}(20,20)(80,20)
\psline[linecolor=black,linestyle=solid,linewidth=0.2mm]{-}(20,40)(80,40)
\psline[linecolor=black,linestyle=solid,linewidth=0.2mm]{-}(20,60)(80,60)
%\psline[linecolor=black,linestyle=solid,linewidth=0.2mm]{-}(0,80)(80,80)

% Terminals
\psframe*[linecolor=red](48,30)(52,34)\rput(50,27){\bf Eve}

\cnode[fillcolor=blue, fillstyle=solid](30,28){1.5}{T_1}
\nput[labelsep=1]{180}{T_1}{\large $T_2$}

\cnode[fillcolor=blue, fillstyle=solid](33,51){1.5}{T_2}
\nput[labelsep=1]{180}{T_2}{\large $T_3$}

\cnode[fillcolor=blue, fillstyle=solid](48,48){1.5}{T_3}
\nput[labelsep=1]{0}{T_3}{\large $T_4$}

%\cnode[fillcolor=blue, fillstyle=solid](33,76){1.5}{T_A}
%\nput[labelsep=1]{270}{T_A}{\bf Alice}

\cnode[fillcolor=blue, fillstyle=solid](70,50){1.5}{T_4}
\nput[labelsep=1]{0}{T_4}{\large $T_5$}

\cnode[fillcolor=blue, fillstyle=solid](68,10){1.5}{T_5}
\nput[labelsep=1]{0}{T_5}{\large $T_0$}

\cnode[fillcolor=blue, fillstyle=solid](31,10){1.5}{T_6}
\nput[labelsep=1]{180}{T_6}{\large $T_1$}

%\cnode[fillcolor=blue, fillstyle=solid](11,68){1.5}{T_7}
%\nput[labelsep=1]{180}{T_7}{\large $T_7$}

% Dimensions
%\pnode(85,0){i_0}
%\pnode(85,80){i_1}
%\ncline[linewidth=0.15mm]{<->}{i_0}{i_1}\lput*{:U}{5m}

\pnode(15,40){i_2}
\pnode(15,60){i_3}
\ncline[linewidth=0.15mm]{<->}{i_2}{i_3}\lput*{:U}{1.25m}

\pnode(40,0){i_4}
\pnode(60,20){i_5}
\ncline[linewidth=0.15mm]{<->}{i_4}{i_5}\lput*{:U}{1.76m}

% Numbering of squares
\iffalse
\rput(2.5,18){\small $13$}
\rput(2.5,38){\small $9$}
\rput(2.5,58){\small $5$}
\rput(2.5,78){\small $1$}

\rput(22.5,18){\small $14$}
\rput(22.5,38){\small $10$}
\rput(22.5,58){\small $6$}
\rput(22.5,78){\small $2$}

\rput(42.5,18){\small $15$}
\rput(42.5,38){\small $11$}
\rput(42.5,58){\small $7$}
\rput(42.5,78){\small $3$}

\rput(62.5,18){\small $16$}
\rput(62.5,38){\small $12$}
\rput(62.5,58){\small $8$}
\rput(62.5,78){\small $4$}
\fi
\end{pspicture}
\caption{\label{fig:testbed} A testbed configuration. The six round nodes ($T_i$) are trusted terminals
that are trying to establish a common secret $\key$. The square node (Eve) is an adversary whose goal is to
guess $\key$. }
\end{center}
\vspace{-0.7cm}
\end{figure}
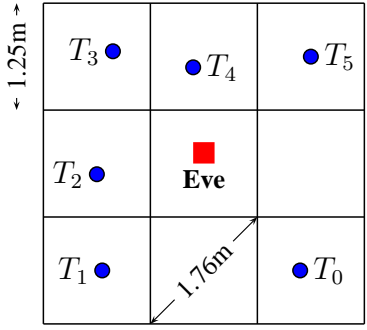

The goal of rotating noise patterns is to emulate independent
packet-erasure channels: Suppose Alice transmits $N$ $x$-packets
during each experiment, equally spread across the $9$ time slots.
Assume that, during each time slot, each node located in the row or
the column that are interfered with receives $0$ of Alice's packets,
while each of the remaining nodes receives all of Alice's packets.
Now consider the positioning of the terminals shown in
Figure~\ref{fig:testbed}.  Because terminal $T_1$ is on the same row
with Eve, $2$ out of the $9$ noise patterns affect Eve but not $T_1$;
hence, if, during an experiment, Alice transmits $N$ packets, $0.22 N$
of these packets are received by $T_1$ but not Eve.  
\iffalse
Similarly,
because of the positions of terminals $T_1$ and $T_2$, $4$ out of the
$9$ noise patterns affect Eve but not both $T_1$ and $T_2$; hence, if
Alice transmits $N$ packets, $0.44 N$ of these packets are received by
$T_1$ and/or $T_3$ but not Eve.  With the same rationale, if we
consider $3$ or more terminals, $6$ out of the $9$ noise patterns
affect Eve but not all these terminals; hence, if Alice transmits $N$
packets, $0.66 N$ of these packets are received by at least one of the
terminals but not Eve.  
\fi
This is the same outcome that we would have,
if the Alice-$T_i$ and Alice-Eve channels were independent, with
$\delta_E (1 -\delta_1) = 0.22$.
\iffalse
, $\delta_E (1 - \delta_1 \delta_2) =
0.44$, and $\delta_E (1 - \delta_1 \delta_2 \delta_3) = 0.66$.
\fi

\vspace{0.2cm}
\noindent
{\bf Our First Try.}
The reason why we need the theoretical network conditions in the Basic protocol is that they
guarantee two facts: 
(i) each terminal $T_i$ receives $M_i = \delta_E N_i$ $x$-packets that are not
received by Eve, and
(ii) the union of all terminals receives $M =
\delta_E N^*$ $x$-packets that are not received by Eve.
These two numbers determine the main parameters of
the protocol, namely how many $y$-packets ($M$), $z$-packets ($M -
\min{M_i}$), and $s$-packets ($L = \min{M_i}$) Alice creates.  

\iffalse
\smallskip
\begin{center}
\begin{tabular}{ c || c | c | c}
Occupied cells & $2$ & $3$ & $\ge 4$ \\
\hline
$M/N$ & $0.22$ & $0.44$ & $0.66$ \\
\end{tabular}
\end{center}
\smallskip
\fi

In our
testbed, we expected our controlled interference to ensure two similar facts: 
(i) each terminal $T_i$ receives at least $\min{M_i} = 0.22
N$ $x$-packets that are not received by Eve, and
(ii) the union of all terminals receives at least $M$
$x$-packets that are not received by Eve, where $M$ can be computed based on the
number of cells occupied by terminals (we skip the computation for lack of space).
Hence, even though the
theoretical network conditions do not hold, we still know how many $y$-,
$z$- and $s$-packets Alice should create.

However, this rationale assumes that, when Alice transmits, all the
nodes that are located in an interfered-with row or column receive
nothing, while the rest receive everything.  
It turned out that, in practice, this cannot be guaranteed with probability $1$, due to random channel-propagation effects,
which creates two new problems:
(1) {\bf Zero secret size}: in rare, but statistically significant occasions, during an experiment, Eve receives
all the $x$-packets received by one of the terminals, i.e., $\min{M_i} = 0$, 
which means that the terminals cannot create any common secret at all.  
(2) {\bf Unpredictable secret size}: the value of $\min{M_i}$ varies significantly between experiments,
which means that the terminals do not know how big a secret they should create (such that Eve has $0$ information about it).
To address these two problems, we adapt our secret-agreement protocol as described in the next three sections.

\iffalse
Note that these values are suitable for any positioning of the terminals and Eve,
as long as no terminal is in the same cell with Eve.
This is because they were derived assuming the positioning of Figure~\ref{fig:testbed},
which is the worst possible positioning of the terminals with respect to Eve, i.e., the one that minimizes the 
efficiency of our protocol: one terminal is positioned in
the same row or column with Eve (which minimizes the length of the secret we can establish), 
while the other terminals cover as many different rows and columns as possible (which minimizes the
number of packets that they receive in common in the initial phase, hence maximizes the number of transmissions that Alice needs to make in
the reconciliation phase).
\fi

\subsection{Gist}

To address the ``zero secret size'' problem,
we make all the terminals take turns in transmitting $x$-packets.
The idea is to make each terminal $T_i$ receive information through multiple different channels (as opposed to 
receiving information only from Alice),
making it unlikely that Eve will collect the same information with $T_i$.
In particular, Eve collects the same information with terminal $T_i$ only when:
for {\em every} single terminal $T_{j\neq i}$, the channel between $T_j$ and $T_i$
happens to be the same with the channel between $T_j$
and Eve, throughout the experiment.  This never happened in any of the
experiments that we ran.  

To address the ``unpredictable secret size'' problem, we estimate the amount of information that 
Eve collects \emph{based on the information collected by the terminals}.
I.e., we essentially pretend that each terminal $T_i$ is Eve and that all the other terminals want to establish a common
secret that is unknown to $T_i$; since we know which packets were received by each terminal (including $T_i$),
we can compute exactly what the size of this supposed secret should be.
Then we combine all these computations to estimate the size of the actual common secret.

We made this last choice based on the following observations:
Channel behavior varies significantly over time, to the point where we cannot estimate or even upper-bound how much information
Eve collects during one experiment based on how much information she collected during past experiments.
Channel behavior also varies over \emph{space}, but less so: 
if, during an experiment, node $T_i$ receives many packets in common with neighbor $T_j$,
then node $T_i$ most likely receives many packets in common with its other neighbors as well.
It turns out that, by measuring how many packets each pair of neighboring terminals receive in common during one experiment, 
we can estimate quite accurately how many packets any terminal and Eve receive in common \emph{in the same experiment}.

\subsection{Adapted Protocol Description}
\label{subsec:AdaptProtocol}

\paragraph*{Initial Phase}
For $j = 1..\numnodes$:
\begin{enumerate}
\item
Terminal $T_j$ transmits $N$ packets (we will call them \emph{$x$-packets}).
\item
Each terminal $T_{i\neq j}$ reliably broadcasts a feedback message specifying which $x$-packets it received.
\item
$T_j$ does the following:
\begin{enumerate}
\item
It counts the number of $x$-packets $N_i$ received by terminal $T_i$, for all $i \neq j$.
\item
It counts the number of $x$-packets $M_{ik}$ received by both $T_{i}$ and $T_{k}$, for all $i, k \neq j$.
\item
It computes $\tilde{\delta_i} = \frac{\max_k{M_{ik}}}{N_i}$ for all $i \neq j$.
\item
It computes $\deltaE = \max_i{\tilde{\delta_i}}$.
\item
It performs step 3 of the initial phase of the Basic protocol, using $\delta_E = \deltaE$.
\end{enumerate}
As a result, $T_j$ creates $M^j = \tilde{\delta_E} N$ linear combinations of the $x$-packets
(we will call them \emph{$y$-packets}) and reliably broadcasts the coefficients it used to create them.
\item
Each terminal $T_{i \neq j}$ reconstructs as many (say $M_i$) of the $M^j$ $y$-packets as it can (based on the $x$-packets it received in step $1$).
\end{enumerate}

\smallskip
\paragraph*{Reconciliation Phase}
\begin{enumerate}
\item
Alice identifies which $M_i$ $y$-packets are known to each terminal $T_i$.
She provides them as input to the program specified in the Appendix, Section~\ref{sec:cooperative},
which outputs a non-negative integer $K_i$ for all $i$.
Then she reliably broadcasts all the $K_i$.
\item
Each terminal $T_i$ creates $K_i$ linear combinations of the $M_i$ $y$-packets that it reconstructed in the initial phase
(we will call them \emph{$z$-packets}).
It reliably broadcasts both the contents and the coefficients of the $z$-packets.
\item
Each terminal $T_i$ combines the $z$-packets it received with the $y$-packets it reconstructed in the initial phase,
and reconstructs all the $M = \sum_j{M^j}$ $y$-packets.
\item
Alice creates $L = M - \sum_i{K_i}$ linear combinations of the $M$ $y$-packets
(we will call them \emph{$s$-packets}), using the construction specified in Lemma~\ref{lem:linear_3} (Section~\ref{app:1}, Appendix).
She reliably  broadcasts the coefficients she used to create all the $s$-packets.
\item
Each terminal $T_i$ reconstructs all the $s$-packets. The common secret is their concatenation
$\key = \langle s_1, \ldots, s_L \rangle$.
\end{enumerate}

\subsection{Discussion of Key Points}

We illustrate the key points of the protocol by considering again the example
where Alice, Bob, and Calvin want to establish a common secret $\key$ in the presence
of passive eavesdropper Eve.

The first difference from the Basic protocol is that we do not know how many $x$-packets
Eve receives in common with each terminal, so, we estimate it based on how many $x$-packets
various pairs of terminals receive in common (initial phase, step 3).
For instance, suppose that, in step 1 of the initial phase, Alice transmits $N=10$ $x$-packets,
Bob receives $x_1$, $x_2$, $x_3$, $x_4$, $x_7$, and $x_8$, while Calvin receives $x_1$, $x_3$, $x_5$, and $x_6$.
Hence, Bob receives $N_1 = 6$ $x$-packets, Calvin receives $N_2 = 4$ $x$-packets,
Bob and Calvin together receive a total of $N^* = 8$ $x$-packets, while they receive
$M_{12} = 2$ $x$-packets in common.
Hence, in step 3 of the initial phase, Alice computes
$\tilde{\delta_1} = 0.33$, $\tilde{\delta_2} = 0.5$, and $\deltaE = 0.5$,
and she creates a total of $M^0 = \deltaE N^* = 4$ $y$-packets.

What we lose relative to the Basic protocol is that we cannot guarantee a minimum reliability,
because we do not know how much information Eve collects during the initial phase:
It is possible that Eve receives more $x$-packets in common with the terminals than we estimate,
which means that, at the end of the initial phase, she knows some fraction of the $M$ $y$-packets,
hence, at the end of the reconciliation phase, she knows some information about the $L$ $s$-packets.
Note that this does \emph{not} mean that Eve knows the common secret $\key$, only that she knows
some information about it, which increases her probability of guessing it right.

A side-effect of estimating the amount of information collected by Eve based on the information collected by the terminals
is that the performance of the protocol depends on the number of terminals $\numnodes$: the more terminals we have,
the more we learn about the quality of channels throughout the network,
hence we can estimate the quality of Eve's channels better.
For instance, if we have only $\numnodes = 3$ terminals, when Alice transmits,
she estimates how many $x$-packets were received in common by Bob and Eve based on how many
$x$-packets were received in common by Bob and Calvin; if it happens that the channel from Alice to Eve
is significantly different from the channel from Alice to Bob, then the estimate is inaccurate.
For $\numnodes = 2$ terminals, the protocol does not work at all, because Alice has no way of estimating
how many $x$-packets Bob received in common with Eve.

The second difference from the Basic protocol is that the terminals take turns in
transmitting $x$-packets (initial phase, step 1) and creating $y$-packets (initial phase, step 3).
Hence, at the end of the initial phase, there exists no terminal that knows all the $y$-packets, and
the reconciliation needs to happen in a distributed manner:
we need to solve a program that takes as input which terminal knows which $y$-packets
and outputs how many $z$-packets (linear combinations of $y$-packets) each terminal needs to reliably broadcast,
such that all terminals learn all the $y$-packets (reconciliation phase, steps 1 and 2).

What we lose relative to the Basic protocol is that we cannot guarantee a minimum efficiency,
because we do not know how much information the terminals will have to broadcast during the reconciliation phase:
At the end of the initial phase, all the terminals together have created $M$ $y$-packets, where $M$ depends on network conditions.
In the reconciliation phase, all the terminals together broadcast $K$ $z$-packets (linear combinations
of $y$-packets), where $K$ also depends on network conditions.
Hence, at the end of the reconciliation phase, Eve knows at least $K$ of the $M$ $y$-packets,
and the terminals create $L = M - K$ $s$-packets (linear combinations of the $y$-packets).
This means that the efficiency of the protocol is $\frac{M - K}{\numnodes N + M - K}$,
which depends on the network conditions during the experiment.

In summary, once we do not assume perfect knowledge of network conditions,
we cannot offer formal guarantees about the reliability and efficiency of a protocol that
relies precisely on these network conditions to generate a secret;
we have to assess its reliability and efficiency experimentally, 
for the particular space where we want to deploy it.

\section{Experimental Evaluation}
\label{sec:jam}

In this section, we experimentally evaluate the Adapted secret-agreement protocol in our testbed.

When we refer to an ``experiment,'' we mean that
we place $\numnodes$ terminals and Eve on the area covered by our testbed, such that each cell is occupied by at most one node,
and we run one round of the Adapted protocol.
We run one such experiment for each possible positioning of $\numnodes$ terminals and Eve,
and we run one such set of experiments for $\numnodes = 3$ to $8$ terminals.
For instance, we run $504$ experiments with $\numnodes = 3$ terminals (because there are $504$ different
ways to position $3$ terminals and Eve on our testbed),
while we run $9$ experiment with $\numnodes = 8$ terminals (because there are $9$ different ways
to position $8$ terminals and Eve on our testbed).

\subsection{Efficiency and Reliability of the Adapted Protocol}

Each graph we present shows efficiency or reliability as a function of the number of terminals $\numnodes$;
for each value of $\numnodes$, we show three values: ``minimum'' is the minimum efficiency/reliability achieved
during any experiment with $\numnodes$ terminals; ``average'' is the average of the efficiency/reliability achieved
across all experiments with $\numnodes$ terminals; ``$50$th quartile'' and ``$95$th quartile'' is the minimum 
efficiency/reliability achieved during $50$\% and $95$\% of the experiments with $\numnodes$ terminals.

Figure~\ref{fig:eff} shows the efficiency of the Adapted protocol:
For $\numnodes = 8$ terminals, it has minimum efficiency $E_{min} = 0.038$; given that the terminals transmit at rate $1$ Mbps, 
this efficiency yields $38$ secret Kbps.
For $\numnodes = 6$ terminals, $E_{min} = 0.028$, which yields $28$ secret Kbps.
The reason efficiency decreases with the number of terminals is related to the ``zero secret size'' problem
(Section~\ref{sec:real}):
when Alice transmits, it is possible that Eve receive all the $x$-packets received by Bob
(if the Alice-Bob channel happens to be the same with the Alice-Eve channel);
we side-stepped this problem by making all the terminals transmit $x$-packets, thereby creating more channel diversity;
however, the fewer the terminals we have, the less the diversity we create, hence the more likely
it is for Eve to receive a large fraction of the $x$-packets received by Bob (or any one terminal).

Figure~\ref{fig:rel} shows the reliability of the Adapted protocol:
For $\numnodes = 8$ terminals, it has minimum reliability $R_{min} = 1$,
i.e., in any experiment, Eve can correctly guess the value of a secret bit with probability $2^{-1} = 0.5$
and the value of an $s$-packet with probability $2^{-800} \rightarrow 0$
(each packet is $800$ bits).
For $\numnodes = 6$ terminals, $R_{min} = 0.2$, i.e., in the worst case,
Eve can correctly guess the value of a secret bit with probability $2^{-0.2} = 0.87$,
but the value of an $s$-packet still with probability $2^{-0.2\cdot 800} \rightarrow 0$.
The reason reliability decreases with the number of terminals is related to the ``unpredictable secret size''
problem (Section~\ref{sec:real}):
each terminal estimates how many $y$-packets to create based on information provided by the other terminals;
the fewer the terminals, the less accurate the estimate, hence the more likely it is to create more $y$-packets
than are secret to Eve.
Note that the ``bad'' experiments (where we achieve very
low reliability) are relatively few: for $\numnodes = 6$ terminals, in $95$\% of the experiments (i.e., possible node placements),
$R_{min} = 0.5$, i.e., Eve can correctly guess the value of a secret bit with probability
$0.7$ and an $s$-packet with probability $2^{-0.5\cdot 800} \rightarrow 0$. 
Also, for any number of terminals, in at least half of the node placements, we achieve minimum reliability $1$
(the $50$\% quartile in Fig. \ref{fig:rel} is always $1$).

\begin{figure}[!t]
\subfigure[\label{fig:eff}Efficiency]{
\includegraphics[width=3.5in]{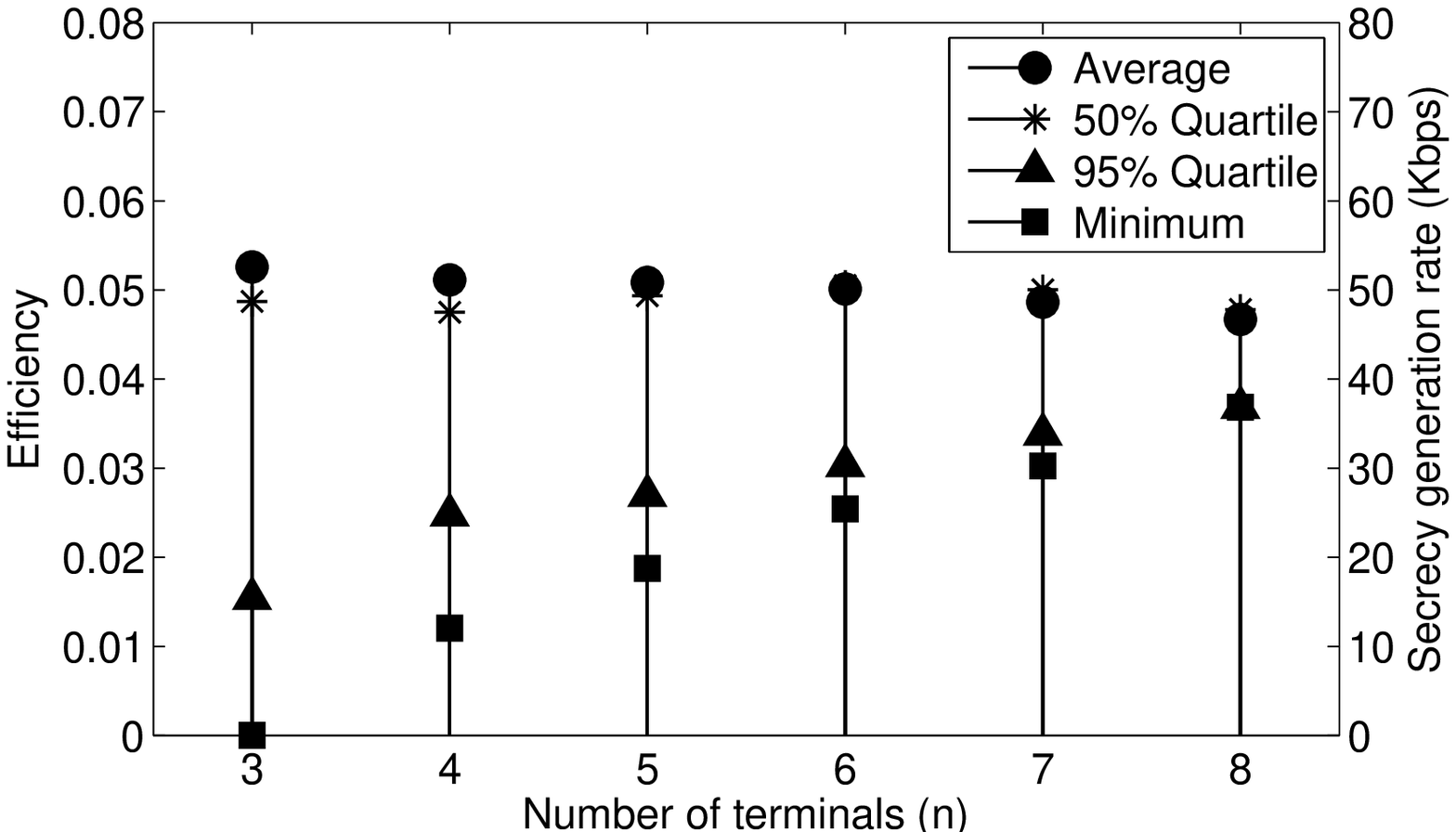}}
\subfigure[\label{fig:rel}Reliability]{
\includegraphics[width=3.5in]{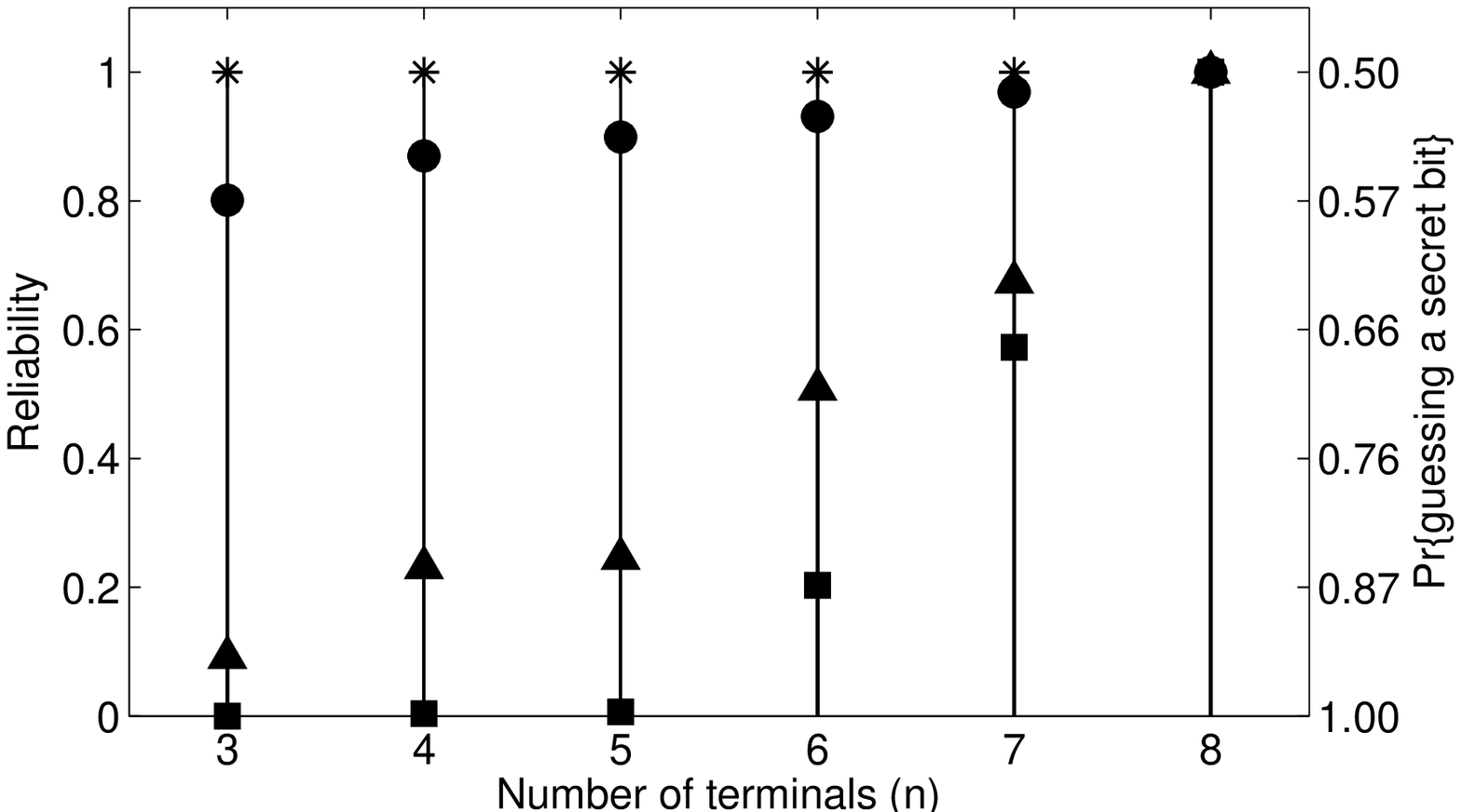}}
\caption{\label{fig:adapted} Performance of the Adapted protocol as a function of the number of terminals $\numnodes$.
For $\numnodes = 8$ terminals, we generate at least $38$ secret Kbps with minimum reliability $1$.
For $\numnodes = 7$ and $6$ terminals, we generate at least $28$ secret Kbps with minimum reliability below $1$,
but such that Eve still has negligible probability of correctly guessing an $s$-packet.
For fewer terminals, Eve's probability of correctly guessing an $s$-packet becomes non-negligible.}
\vspace{-0.5cm}
\end{figure}

\subsection{Privacy Amplification}

In the Adapted protocol, each terminal generates as many $y$-packets as (it estimates to be) possible,
such that these packets remain secret from Eve;
as we saw in the last section, for fewer than $8$ terminals,
this results in creating a larger secret than appropriate, hence achieving reliability well below $1$.

One practical---if not elegant---way of increasing reliability would be to add a conservative privacy-amplification step
at the end of the reconciliation phase of the Adapted protocol: instead of creating $L = M - \sum_i K_i$
linear combinations of the $y$-packets, create $\alpha L$ combinations, where $\alpha \in (0,1)$ depends 
on the target secret bitstream rate that we want to generate.
For instance, with $\numnodes = 6$ terminals, the Adapted protocol yields a $28$ Kbps bitstream,
but achieves minimum reliability $0.2$, because it creates a larger secret than it should;
if we only care to generate a $1$ Kbps secret bitstream, then we can instruct the protocol to create
$0.035 L$ $s$-packets at the end of the reconciliation phase, which would result in higher reliability.

\begin{figure}[!t]
\includegraphics[width=3.5in]{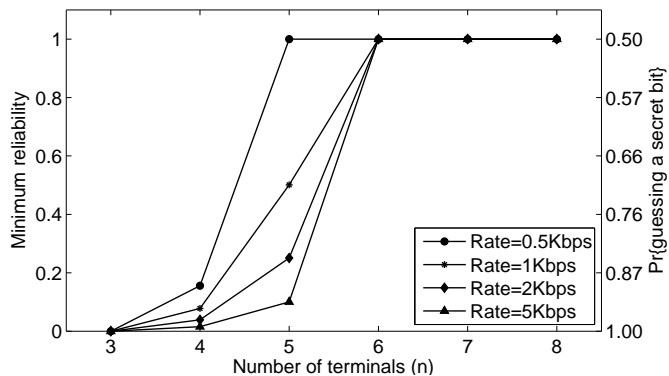}
\caption{\label{fig:tradeoff} 
Performance of the Adapted protocol when using privacy amplification.
The plot shows minimum reliability as a function of the number of terminals $\numnodes$, for different target secret bitstream rates.
For $\numnodes = 6$ terminals, we get minimum reliability $1$ for $5$ secret Kbps or less.
For $\numnodes = 5$ terminals, we get minimum reliability $1$ for $0.5$ secret Kbps or less.
}
\vspace{-0.5cm}
\end{figure}

Figure~\ref{fig:tradeoff} shows the minimum reliability that we get as a function of the number of terminals $\numnodes$,
for different target secret bitstream rates:
if we instruct our protocol to generate $5$ secret Kbps, 
we get minimum reliability $R_{min} = 1$ for $\numnodes = 6$ or more terminals;
if we instruct our protocol to generate $0.5$ secret Kbps, 
we get minimum reliability $R_{min} = 1$ for $\numnodes = 5$ or more terminals.
This suggests that our protocol could achieve higher reliability by adapting its secret-generation rate according 
to the number of participating terminals; exploring how exactly to perform this adaptation is part of our future work.

\section{Discussion}
\label{sec:discuss}

In this section, we discuss the limitations of our proposal, outline ideas on how to address them,
and state open challenges.

{\bf Collusion.}
We can model collusion by assuming that Eve is physically present in many network locations at the same time;
in that case, she can use the union of the packets received at all locations to compromise the security of the common secret.
In the theoretical model, this scenario is straightforward to address: by being present in many locations, Eve essentially decreases
her erasure probability $\delta_E$;
hence, the Basic protocol does not need to change, it only needs to use Eve's effective erasure probability.
In a real testbed, we will have to be more conservative about the number of $y$-packets (hence, common secret) we create,
depending on the level of collusion we want to prevent.
For instance, suppose that Eve may be present in up to two network locations;
Alice will have to estimate how many packets Bob and Eve received in common based on
how many packets Bob, Calvin, and some other terminal received in common
(as opposed to considering only Bob and Calvin).
We expect the cost to be  reduced efficiency.
%\Xnote{The cost will be that we will need more terminals to achieve our current efficiency.}

%which is natural in the presence of a stronger adversary.
%Note also wireless communication between colluding eavesdroppers could potentially be detected.

{\bf Less controlled environment.}
Our testbed implies a controlled environment, e.g., a military facility,
where (i) we can rely on trusted interferers to create artificial noise, 
and (ii) we can assume that nodes located outside the perimeter of the testbed
cannot overhear transmissions initiated from within the perimeter
(e.g., because the testbed is within walls that are resistive to outgoing radio signals).
Our ultimate goal is not a such controlled environment.
We plan to explore the idea of having the terminals themselves generate the necessary noise,
such that we do not require trusted interferers or well defined boundaries.

{\bf Eve has a custom physical layer.}
In our testbed, Eve possesses a standard physical layer, which means
that she cannot recover packets that are discarded by her physical layer.
One might argue that, if she had access to a custom physical layer, she could buffer these packets
and collect some amount of information from them (because, presumably, not all the bits of a discarded packet
are received incorrectly).
Even though we did not perform experiments with custom hardware, we have some evidence that our results
would still hold: we measured the bit error rate experienced by a terminal located in a row or column
that is interfered with, for a wide range of packet sizes; we found that the bit error rate remains constant,
independently from the packet size; this is consistent with our expectation that, when a packet
is lost due to our artificial interference, most of its bits are received incorrectly by the physical layer of
the device.

{\bf Eve has directional antennae.}
In our testbed, Eve is a commodity wireless node. One might argue that, if she had access to
larger and/or directional antennae, she could use them to cancel out the interferes' noise.
Making our interference robust to such attacks is something we want to explore in our future work.
However, we should note that our interference is nearly omnidirectional, hence we expect larger and/or directional
antennae not to make it significantly easier for Eve to cancel it out.

{\bf Residual uncertainty.}
In our testbed, we achieve reliability below $1$ in certain experiments, 
which is a result of the ``unpredictable secret size'' problem (Section~\ref{sec:real}).
We want to improve this through more sophisticated interference, e.g., by using beam forming.
However, we would like to note that some level of uncertainty will be unavoidable in a real network.
This is similar to the uncertainty that exists in different settings,
e.g., electromagnetic emissions modeling in side-channel attacks \cite{SDattack1,SDattack2}.
In Quantum Key Distribution as well, experimental deployments still have
uncertainty limitations w.r.t. the idealized system---these limitations were especially
severe in the first deployments \cite{Brassard00}.

{\bf Jamming attacks.}
If Eve employs a jamming radio signal, she can disrupt all communication and cause a denial-of-service attack. 
However, she then reveals her presence, and homing tools can be used to geographically locate her. 
Such attacks are endemic to wireless communications, and several methods have been proposed to counteract them, 
including use of spread-spectrum, priority messages, lower duty cycle, region mapping, and mode change \cite{W1,W2}.

\iffalse
\vspace{0.2cm}
\noindent
{\bf Wireless vs. Quantum key distribution:}
Quantum Key Distribution (QKD) protocols have components reminiscent
of ours: Alice creates random bits and transmits them, Alice and Bob
perform information reconciliation to ensure they share the same key,
and then perform privacy amplification to eliminate the partial
information they estimate Eve to have about the key \cite{Bennet92,reviewQKD}.
A stark difference of our work is the {\em group} aspect, which, to the best of our knowledge, has not been resolved in the QKD literature: to
simultaneously create a key between the group of terminals
we need them to share some physical linkage;
we can create this with broadcast
transmissions, which are quite natural in wireless. 
{Another difference is that the communication in our system is short range (a few $10$s or $100$s of
meters)}; we accordingly target applications where close proximity
communication needs to be secure, as is the case in military
headquarters, government administration centers, enterprise 
board meetings. 

\fi

%\vspace{0.2cm}
%\noindent
{\bf Artificial noise is necessary.}
When we started this work, we did not consider using artificial noise; instead,
we were planning to rely on the natural wireless channel conditions.  
We found that any node's (hence, also Eve's) erasure probability under natural network conditions can be very low (on the order of $10^{-3}$), which means that, unless we artificially increase it,
secret generation will be very slow (on the order of a few bits per second).

{\bf Moving beyond idealized modeling.}
A valid criticism of information-theoretical security is that the security proofs assume
idealized models; we also started by assuming an idealized model,
but took two further steps: (i) we attempted to artificially create network conditions that emulate the idealized model 
and (ii) we adapted our basic protocol to these network conditions.
Ours is a proof-of-concept first attempt to emulate an idealized model: we could do more careful modeling of
 the natural wireless environment and the received signals, and we could use more sophisticated antennae, as well as more
 carefully calibrated interference.  Still, we believe that our results are promising, not only in their own merit, 
but also because they indicate that it might be possible, with similar methods, to translate other
protocols that assume idealized models to practical systems.

\iffalse
In our presented experiments we had two types of randomness,
controlled randomness due to the interference we introduce,
and uncontrolled randomness due to the electromagnetic effects of multipath reflections.
 It might be possible to reduce the latter (we did not try), for example  through careful channel modeling, adaptation to channel conditions, deployment in a controlled physical environment. However, as we are working with a physical systems, we do not believe we can completely eliminate it; thus we opted to accept it. We show experimentally that we can still generate at high rate high reliability secret keys.
\fi

\section{Related Work}
\label{sec:related}
$\triangleright$ {\em Information-theoretical secrecy with idealized model:}
In a seminal paper on ``wiretap'' channels, Wyner \cite{Wyner75}
pioneered the notion that one can establish information-theoretic
secrecy between Alice and Bob by utilizing the noisy broadcast nature
of wireless transmissions. However, his scheme works only with
perfect knowledge of Eve's channel and only if Eve has a
worse channel than Bob.  In a subsequent seminal work,
Maurer~\cite{Ma-IT93} showed the value of feedback from Bob to Alice,
even if Eve hears all the feedback transmissions (i.e., the feedback
channel is public).  He showed that even if the channel from Alice to
Eve is better than that to Bob, feedback allows Alice and Bob to
create a key which is information-theoretically secure from Eve.
This line of work has led to a rich set of literature on pairwise
unconditional secret-key agreement with public discussion 
(see \cite{FuzzyExt} and references therein). 
  In \cite{PDT09}, the
authors proposed increasing {\em pairwise} secrecy through friendly
jamming, but still assuming perfect knowledge of Eve's channel.
The problem of key agreement between a set of terminals with access to noisy broadcast
channel and public discussion channel (visible to the eavesdropper)
was studied in \cite{CsNa08}, where some achievable secrecy rates were
established, assuming Eve does not have access to the noisy broadcast
transmissions. This was generalized in \cite{GoAn-P2} by developing
(non-computable) outer bounds for secrecy rates.  To the best of our
knowledge, ours is the first work to consider multi-terminal secret
key agreement over erasure networks, when Eve also has access to the
noisy broadcast transmissions. Moreover, unlike the
 works in \cite{Wyner75,Ma-IT93,GoAn-P2} that
assume infinite complexity operations, our scheme is computationally
efficient.

$\triangleright$ {\em Practical protocols for pairwise secrecy:}
The fact that we establish information-theoretic secrecy for a {\em
 group} of nodes fundamentally distinguishes our work from a class of
protocols recently proposed in the literature, which aim to extract
{\em pairwise} information theoretical secrecy from physical channel
characteristics.  This class of protocols establishes secret keys
between two parties, Alice and Bob, in the presence of a passive
adversary Eve, building on different characteristics of physical
signals, such as UHF channel values \cite{unconv-key-man}, Rayleigh
fading \cite{signal-envelope}, channel impulse responses
\cite{telepathy-journal,ipsn2010}, ultra-wide band (UWB) channel properties
\cite{uwb}, and phase reciprocity \cite{phase}.  These
works leverage the reciprocity of the physical wireless channel
between Alice and Bob to establish a common secret between them;
% they
%rely on spatial variability to ensure Eve cannot observe the same
%channel as Alice and Bob, and time variability to achieve a high
%secret key rate.  
Closer to our work are perhaps \cite{infocom2010}
and \cite{ipsn2009}, although they still generate pairwise secrecy
without interaction. The work in \cite{infocom2010} is the only
paper, as far as we know, which employs erasures; the work in
\cite{ipsn2009} has Bob deliberately introduce errors into the
transmissions of Alice to confuse Eve.  
Unlike our work, % even for the Alice-Bob scenario, 
all these schemes 
offer modest secrecy rates (for example of the order $10$ bits/s by the
scheme in \cite{telepathy-journal}) which is orders of magnitude lower
than what our scheme can achieve. 
%do not offer a {\em provably
% optimal} (in an information theoretical sense) approach. Moreover,
%because they mainly 
Moreover, they rely on the reciprocity of the channel between
Alice and Bob, %they cannot take advantage of broadcasting to extend
%the key generation scheme to groups of nodes, 
and thus do not scale
well as the number of nodes increases.  
%Finally, they may require
%special hardware or modification of the wireless transceivers, in
%contrast to our scheme which can be deployed at higher layers.
  Similarly, to achieve pairwise
secrecy, {a recent work \cite{EG10} proposes to artificially
  enhance the randomness of the channel fading.  
In contrast, in this work we use active interference to
  emulate noisy (erasure) channel conditions that we then utilize to
  create the appropriate channel conditions for {group secrecy}.}

$\triangleright$ {\em Computational group secrecy:}
{ Group secret key generation
with { computational
 security guarantees} has also received significant attention (see
  \cite{last,Perrig} and references therein).}

\section{Conclusions}
\label{sec:conclusion}

We have presented a protocol that enables a group of nodes, connected to the same
broadcast channel, to exchange a common secret bitstream in the presence of an adversary.
Our protocol does not use public-key (or any classic form of) cryptography and relies, instead, 
on the assumption that an eavesdropper may overhear big chunks of the communication between the other nodes,
but cannot overhear \emph{all} the bits received by any single node.
The key properties that differentiate our protocol from prior theoretical work are that
it works for an arbitrary number of nodes, has polynomial complexity,
and is implementable in simple devices without any changes to their physical or MAC layers.
On the practical side, as a proof of concept, we adapted our protocol to a small wireless testbed of $14$ m$^2$
and presented an experimental demonstration of $\numnodes = 8$ wireless nodes
generating a $38$ Kbps common secret bitstream in the presence of an adversary (located at least $1.76$ m away from any other node).
To the best of our knowledge, ours is the first experimental evidence of generating information-theoretically secret 
bits at kilobit-per-second rates.

%--------------------------------------------------------------------

\appendix
%The following proofs assume basic knowledge of coding theory.
\subsection{Proof of Lemma~\ref{lem:secrecy}} \label{app:1}
%===============================================================

Eve has two occasions to obtain information about the secret $\key$:
in step 1 of the initial phase, she receives $N_E$ $x$-packets;
in step 1 of the reconciliation phase, she receives the $M - L$ $z$-packets that are reliably broadcast by Alice.
According to Lemmas~\ref{lem:linear_1} and \ref{lem:linear_3}, Eve obtains no information about $\key$ in either occasion. In Lemma~\ref{lemma_conc} we prove exponential convergence to the average values.
%\end{proof}
\hfill$\blacksquare$

\begin{lemma}
\label{lem:linear_1}
Consider a set of $N$ $x$-packets, say $x_1, \ldots, x_N$, 
and assume
 Eve has a subset of size $N_E$ of the $x$-packets.
%say  $w_1, \ldots, w_{N_E}$.
Construct $N-N_E$ $y$-packets, say $y_1,\ldots,y_M$, as
 \[ Y=A X, \]
where matrix $X$  has as rows the $N$ $x$-packets,
matrix $Y$ has as rows the $N-N_E$ $y$-packets, and 
 $A$ is the  generator
matrix of a Maximum Distance Separable (MDS) linear code 
with parameters $[N,N-N_E,N_E+1]$ ({\em e.g.,}
a Reed-Solomon code \cite{MacSloane-Coding}).
Then the $M$ $y$-packets are
information-theoretically secure from Eve, 
irrespective of which subset (of size $N_E$) of the $x$-packets  Eve has.
\end{lemma}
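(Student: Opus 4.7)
The plan is to exploit the defining property of an MDS code with parameters $[N, N-N_E, N_E+1]$: any $N-N_E$ columns of the generator matrix $A$ are linearly independent. This is equivalent (via the Singleton-bound condition) to the assertion that every set of $N-N_E$ coordinates forms an information set of the code, and it is the only structural fact about $A$ I intend to use.

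First, I would fix an arbitrary subset $\mathcal{S} \subseteq \{1, \ldots, N\}$ of size $N_E$ indexing the $x$-packets that Eve has received, and let $\bar{\mathcal{S}}$ denote its complement (of size $N-N_E$). Partitioning columns of $A$ and rows of $X$ according to this split, the matrix equation $Y = A X$ decomposes as
\[
Y \;=\; A_{\mathcal{S}}\, X_{\mathcal{S}} \;+\; A_{\bar{\mathcal{S}}}\, X_{\bar{\mathcal{S}}},
\]
where $A_{\mathcal{S}}, A_{\bar{\mathcal{S}}}$ are the column-submatrices of $A$ and $X_{\mathcal{S}}, X_{\bar{\mathcal{S}}}$ are the corresponding row-submatrices of $X$. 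Eve's observation is exactly $X_{\mathcal{S}}$ (together with the publicly known matrix $A$), so the term $A_{\mathcal{S}} X_{\mathcal{S}}$ is known to her, and her residual uncertainty about $Y$ is entirely the uncertainty about $A_{\bar{\mathcal{S}}} X_{\bar{\mathcal{S}}}$.

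Next I would apply the MDS property: since $|\bar{\mathcal{S}}| = N - N_E$, the square matrix $A_{\bar{\mathcal{S}}}$ is invertible. Hence, for any fixed $X_{\mathcal{S}}$, the map $X_{\bar{\mathcal{S}}} \mapsto Y - A_{\mathcal{S}} X_{\mathcal{S}}$ is a bijection between the space of possible $X_{\bar{\mathcal{S}}}$ and the space of possible $Y$. Modelling the $x$-packets as i.i.d.\ uniform symbols over the underlying field (which Alice may do by construction), the rows of $X_{\bar{\mathcal{S}}}$ are uniform and independent of $X_{\mathcal{S}}$; the bijection then transports this uniformity to $Y$, giving
\[
H(Y \mid X_{\mathcal{S}}, A) \;=\; H(A_{\bar{\mathcal{S}}} X_{\bar{\mathcal{S}}} \mid X_{\mathcal{S}}, A) \;=\; H(X_{\bar{\mathcal{S}}}) \;=\; H(Y),
\]
which is the definition of information-theoretic secrecy of the $y$-packets from Eve. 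Since $\mathcal{S}$ was arbitrary, the argument holds uniformly over every subset of size $N_E$ that Eve might have received.

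The main subtlety, rather than obstacle, is making sure that the invocation of the MDS property is airtight for every choice of $\mathcal{S}$ and that the uniformity assumption on the $x$-packets is explicitly justified (it can be ensured by Alice sampling $x$-packet contents uniformly from the field). The existence of a code with the stated parameters is standard: a Reed--Solomon code works provided the field size is at least $N$, which we may assume by choosing the symbol alphabet $\mathbb{F}_{2^s}$ with $s$ large enough.
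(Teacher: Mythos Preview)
Your argument is correct and rests on the same MDS fact as the paper's, but you invoke it from the column side whereas the paper works from the row side. The paper stacks $A$ on top of the $N_E\times N$ selector matrix $A_E$ encoding Eve's packets and shows the resulting $N\times N$ matrix $B$ has full rank: every nonzero vector in the row span of $A$ has Hamming weight at least $N_E+1$ (the minimum-distance condition), while every vector in the row span of $A_E$ has weight at most $N_E$, so the two spans meet only at zero; full rank of $B$ then gives $H(Y,W)=N\Lambda$ and hence $H(Y\mid W)=(N-N_E)\Lambda=H(Y)$. You instead use the equivalent statement that any $N-N_E$ columns of $A$ are independent, making $A_{\bar{\mathcal S}}$ invertible and the map $X_{\bar{\mathcal S}}\mapsto Y$ a bijection conditioned on Eve's view. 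Your route makes the bijection (and the needed uniformity assumption on the $x$-packets) explicit; the paper's route is a clean rank computation that never names a specific subset $\mathcal S$. Both are equally short and valid.
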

%\begin{proof}

{\em Proof:}
$\triangleright$Let  $W$ be a matrix that has as rows the packets Eve has.
  To prove that the $y$-packets are information-theoretically secure from Eve, we must show that: $$H(Y|W) = H(Y).$$
$\triangleright$  We can write
\[\left[\begin{array}{c} Y\\ W \end{array}\right]=
\left[\begin{array}{c} A\\ A_E \end{array}\right] X \stackrel{def}{=} B X,\]
where $A_E$ is a ${N_E\times N}$ matrix of $\mathrm{rank}(A_E)=N_E$, which
specifies the $N_E$ distinct $x$-packets that are known to Eve.
 $A_E$ is {\em not known} to us, however 
we know  is that in each row of $A_E$ there is only
one $1$ and the remaining elements are zero; so all of the vectors in
the row span of $A_E$ have Hamming weight (the number of nonzero elements of a vector \cite{MacSloane-Coding})
less than or equal to $N_E$. 
On the other hand, from construction, $\mathrm{rank}(A)=N-N_E$, and each vector in the row span of $A$
has Hamming weight larger than or equal to $N_E+1$  \cite{MacSloane-Coding}; thus the row span of $A$
and $A_E$ are disjoint (except for the zero vector) and the
matrix $B$ is full-rank, i.e. $\mathrm{rank}\left( B \right)=N$.

$\triangleright$ 
If  the packets $x_i$ have length $\Lambda$, we have that:
\begin{align*}
  &H(Y|W) = H(Y,W)-H(W)= \nonumber\\
&= \mathrm{rank}\left( B \right) \Lambda - \mathrm{rank}(A_E) \Lambda = (N - N_E )\Lambda\nonumber\\
&=\mathrm{rank}(A)\Lambda=H(Y). \quad\quad \quad\quad \quad\quad\quad\quad \quad\quad\quad\quad\quad\quad  \hfill{\blacksquare}
\end{align*}
%\end{proof}

%============================================================================
\begin{lemma}
\label{lem:linear_3}
Consider a set of $M$ $y$-packets, say $y_1, \ldots, y_M$,
and a set of $M-L$  $z$-packets, say $z_1, \ldots, z_{M-L}$, 
related as
 \[ Z=A_Z Y, \]
where matrix $Y$  has as rows the $M$ $y$-packets,
matrix $Z$ has as rows the $M-L$ $z$-packets, and 
 $A_Z$ is a known $M-L \times M$ full rank matrix. 
Assume that Eve knows all the $z$-packets. 
Using any standard basis-extension method \cite{Horn},
find an $L\times M$  matrix $A_S$, with 
$\mathrm{rank}\left( A_S \right)=L$,  such that 
$$\mathrm{rank}\left(\left[\begin{array}{c} A_S\\ A_Z \end{array}\right]\right)=M.$$
Then we can construct $L$ $s$-packets, say  $s_1,\ldots,s_L$, as 
\[\mathbf{S}=A_S Y,\]
where matrix $\mathbf{S}$  has as rows the $s$-packets,
that are information-theoretically secure from Eve.
\end{lemma}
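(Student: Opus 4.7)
\textbf{Proof proposal for Lemma~\ref{lem:linear_3}.} The plan is to mimic the information-theoretic argument used in Lemma~\ref{lem:linear_1}: reduce secrecy of $\mathbf{S}$ given $Z$ to a rank computation on the stacked coefficient matrix. Concretely, I would write
\[
\left[\begin{array}{c} \mathbf{S} \\ Z \end{array}\right]
=
\left[\begin{array}{c} A_S \\ A_Z \end{array}\right] Y
\stackrel{\text{def}}{=} B\, Y,
\]
where by construction $\mathrm{rank}(B)=M$ and $\mathrm{rank}(A_Z)=M-L$, $\mathrm{rank}(A_S)=L$.

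The first step is to justify that such an $A_S$ always exists. Since $A_Z$ is a known $(M-L)\times M$ matrix of rank $M-L$, its row span is an $(M-L)$-dimensional subspace of the ambient $M$-dimensional row space. A standard basis-extension procedure (e.g.\ selecting $L$ vectors not lying in the row span of $A_Z$ one at a time, or completing an echelon form) produces $L$ rows that, together with the rows of $A_Z$, span the whole ambient space. Assembling these rows into $A_S$ gives $\mathrm{rank}(A_S)=L$ and $\mathrm{rank}(B)=M$.

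The second step is the entropy computation. Treating each $y$-packet as a vector of $\Lambda$ field symbols (each row of $Y$), and viewing Eve's observation of all the $z$-packets as the sole side information about $\mathbf{S}$, I would write
\begin{align*}
H(\mathbf{S}\mid Z) &= H(\mathbf{S},Z) - H(Z) \\
&= \mathrm{rank}(B)\,\Lambda - \mathrm{rank}(A_Z)\,\Lambda \\
&= M\Lambda - (M-L)\Lambda \\
&= L\Lambda \\
&= \mathrm{rank}(A_S)\,\Lambda = H(\mathbf{S}),
\end{align*}
where the equalities $H(\mathbf{S},Z)=\mathrm{rank}(B)\Lambda$, $H(Z)=\mathrm{rank}(A_Z)\Lambda$, and $H(\mathbf{S})=\mathrm{rank}(A_S)\Lambda$ follow from the same observation used in Lemma~\ref{lem:linear_1}: the entropy of a linear image of a uniform vector of $y$-packets equals $\Lambda$ times the rank of the linear map, since a full-rank linear map yields a uniform distribution on its image.

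The main obstacle, in my view, is conceptually mild but worth being careful about: one must be explicit about the underlying probability model on $Y$ that makes the identities $H(Y)=M\Lambda$, $H(BY)=\mathrm{rank}(B)\Lambda$ valid. In our protocol the $y$-packets are linear combinations of independent $x$-packets constructed via the MDS code of Lemma~\ref{lem:linear_1}, so from Eve's viewpoint (who has no information about the $y$-packets beyond what $Z$ reveals) they are uniformly distributed over $\mathbb{F}_{2^s}^{M\Lambda}$, which is exactly what is needed. Once this is noted, the rank argument above closes the proof.
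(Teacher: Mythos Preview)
Your proposal is correct and follows essentially the same approach as the paper: write $H(\mathbf{S}\mid Z)=H(\mathbf{S},Z)-H(Z)$, evaluate both terms as $\Lambda$ times the rank of the corresponding coefficient matrix, and conclude $H(\mathbf{S}\mid Z)=L\Lambda=H(\mathbf{S})$. The paper's proof is terser (it omits your remarks on the existence of $A_S$ and the uniformity assumption on $Y$), but the argument is identical.
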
 
{\em Proof:}
To prove that the $s$-packets are information-theoretically secure from Eve, we need to show that $H(\mathbf{S}|Z) = H(\mathbf{S})$.
 Similarly to the proof of Lemma~\ref{lem:linear_1}, if  the $y$-packets have length $\Lambda$, we have that:
\begin{align*}
  &H(\mathbf{S}|Z) = H(\mathbf{S},Z)-H(Z)= \mathrm{rank}\left(\left[\begin{array}{c} A_S\\ A_Z \end{array}\right]\right)\Lambda-\nonumber\\&-  \mathrm{rank}\left( A_Z\right)\Lambda= L\Lambda= \mathrm{rank}\left( A_S\right)\Lambda=H(\mathbf{S}).\quad\quad \quad\quad\quad\hfill{\blacksquare}
%& =M \Lambda - (M-L)\Lambda=  \quad\quad\quad\quad \quad\quad\quad\quad\quad\quad  
\end{align*}

%========================================================================
%\SDnote{[Changed the notation and shortened proof]}
%\textcolor{green}{
\begin{lemma}\label{lemma_conc}
The values of the parameters in Lemma~\ref{lem:secrecy}
converge  exponentially fast in $N$  to their expected values.
\end{lemma}
\begin{proof}
Let us consider the {\em random} variables
$M$, $M_{i}$, and $L$ defined in Section
\ref{sec:protocol}. For convenience, we will work with the normalized
random variables $\overline{M}\triangleq M/N$,
$\overline{M}_{T_i}\triangleq M_{T_i}/N$, and $\overline{L}\triangleq
L/N$.  Define the random variable $\eta_j^{(i)}$ as 
\begin{equation}
\eta_j^{(i)} = \left\{\begin{array}{ll} 1 & \text{if the $j$th $x$-packet is received}\\ & \text{by $T_i$ but not by Eve},\\ 0 & \text{otherwise}. \end{array} \right. \nonumber
\end{equation}
Then we can write $\overline{M}_{i}=\frac{1}{N}\sum_{j=1}^N \eta_j^{(i)}$ and we have 
$\mu=\mu_i\triangleq\Expc{\overline{M}_{i}}=(1-\delta)\delta_E$. As defined before,
we have also $\overline{L}=\min_i \overline{M}_{i}$.
Now the  efficiency
%\[
$E = \frac{\overline{L}}{1+\overline{M}-\overline{L}}$
%\]
calculated in Lemma~\ref{lem:efficiency} itself is a random variable,
and using the Chernoff bound we can show that it concentrates
exponentially fast in $N$ to $\frac{\delta_E
  (1-\delta)}{1+\delta_E(\delta-\Delta)}$.  It is easy to see that
$\Expc{\overline{L}}=(1-\delta)\delta_E \stackrel{def}{=}\mu$ and
$\mu_M\triangleq \Expc{\overline{M}}=(1-\delta^{m-1})\delta_E$.  Using
concentration results (Chernoff bound
\cite[Chapter~4]{MiUp-ProbComp}), we can easily show that,
$\Prob{|\overline{L}-\mu|> \epsilon\mu}\leq
\exp\left(-\Omega(n\epsilon^2 \mu N)\right)$.  Using similar argument
for $M$ we can also write $\Prob{|\overline{M}-\mu_M|>\epsilon\mu_M}
\le \exp\left(-\Omega(\epsilon^2 \mu_MN)\right)$.  Because both
$\overline{M}$ and $\overline{L}$ concentrate around their expected
values exponentially fast in $N$, so does the efficiency $E$.
\end{proof}
\subsection{Proof of Lemma \ref{lem:complexity}}

%\begin{proof}
We concentrate on Alice (the other terminals perform fewer operations).
%$N$ is the number of $x$-packets Alice transmits in step 1 of the initial phase,
%and $L$ is the number of $s$-packets that she creates in step 3 of the reconciliation phase.
%$\triangleright$
In step 3 of the initial phase, Alice creates up to $N$ $y$-packets, 
using the construction specified in lemma \ref{lem:linear_1}; this requires up to $N^3$ operations.
%
%$\triangleright$
In step 1 of the reconciliation phase, she creates up to $N-L$ $z$-packets, using a network coding construction \cite{mono}; this requires up to $(N-L)N^2$ operations.
%
%$\triangleright$
In step 3 of the reconciliation phase, she creates $L$ $s$-packets, using the construction
specified in Lemma~\ref{lem:linear_3}; this requires up to $L N^2$ operations. 
%$\triangleright$

Thus, Alice performs at most $2N^3$ operations to create $L = \delta_E (1 - \max_i \delta_i) N$ secret packets.
For $\delta_E$ and $\delta_i$ constant, Alice performs $O(N^2)$ operations per secret packet.
%\end{proof}
\hfill $\blacksquare$

\subsection{Coded Cooperative Data Exchange}  
\label{sec:cooperative}
Assume we have $n$ nodes and a set of packets; each node has  a subset of the packets, and is interested in collecting the ones she misses. We want to achieve this using the minimum total number of transmissions from the nodes. 
We can solve this problem in {\em polynomial time}
\cite{Sprintson10,Courade10};
for completeness, we provide in the following  an Integer Linear Program (ILP)
that accepts an efficient { polynomial-time} solution  \cite{Courade10}.

We will use the following notation:
\begin{itemize}
\item $\mathcal{L}$: a subset of the nodes; there exist  $2^n$ such subsets.
\item $\mathcal{L}^c$: set of all nodes not in $\mathcal{L}$.
%\item $\mathcal{P}_k$: set of all packets that node $k$ has.
\item $\mathcal{P}_i^c$: set of all packets that node $i$ does not have.
\item $K_i$: total number of transmissions that source $i$ makes.
\end{itemize}
\begin{align}
& \min \;\; K_1+K_2+\ldots + K_n \nonumber\\
&\mbox{subject to}\nonumber\\
%& \mbox{For each receiver $\ell$} \nonumber \\
& \sum_{i\in {\mathcal{L}}} K_i \geq |\bigcap_{i\in {\mathcal{L}^c}} \mathcal{P}_{i}^c|, \quad\;\;  \forall \mathcal{L},\nonumber \\
& K_i\in \mathbf{Z}^+. \nonumber
\end{align}

\begin{thebibliography}{1}

\small


\bibitem{DiH76} W.~Diffie and M.~E.~Hellman,
``New directions in cryptography,''
{\em IEEE Transactions on Information Theory}, vol.~22, no.~6,
pp.~644-654, 1976.

\bibitem{RSA78} R.~L.~Rivest, A.~Shamir, and L.~Adleman,
``A method for obtaining digital signatures and public-key cryptosystems,''
{\em Communications of the ACM}, vol.~21, no.~2, pp.~120-126, 1978.


\bibitem{C1}  Companies developing QKD systems: id Quantique, MagiQ Technologies, SmartQuantum and Quintessence Labs.

\bibitem{Qmil} A. Mink,  X. Tang, L. Ma, T. Nakassis, B. Hershman, J.C. Bienfang, D. Su, R. Boisvert, C. W. Clark, and C. J. Williams, ``High speed quantum key distribution system supports one-time pad encryption of real-time video,'' \emph{Proceedings of SPIE}, vol.~6244, 2006.



\bibitem{Bennet92} C. H. Bennett, F. Bessette, G. Brassard, L. Salvail, and J. Smolin, ``Experimental quantum cryptography,'' \emph{Journal of Cryptology}, vol.~5, no.~1,  pp.~3-28, 1992.


\bibitem{reviewQKD} V. Scarani, H. Bechmann-Pasquinucci, B. Cerf, M. Dusek, N. Lutkenhaus, and M. Peev, ``The security of practical quantum key distribution,'' \emph{Reviews of Modern Physics}, vol.~81, July-Sept. 2009.

\bibitem{Brassard00} G. Brassard, N. Lütkenhaus, T. Mor, and B. C. Sanders, ``Limitations on practical quantum cryptography,'' \emph{Physical Review Letters}, vol.~85, no.~6, 2000.


\bibitem{SDattack1} B. Yang, K. Wu, and Ramesh Karri, ``Scan based side channel attack on dedicated hardware implementations of data encryption standard,''
\emph{ International Test Conference}, 2004.

\bibitem{SDattack2} Y. Zhou, D. Feng, ``Side-channel attacks: ten years after its publication and the impacts on cryptographic module security testing,'' \emph{Cryptology Eprint Archive}, 2005.

%=============
\bibitem{stinson} D. R. Stinson, ``Cryptography'', Hapman $\&$ Hall/CRC, 2002.

\bibitem{mono} C.~Fragouli and E.~Soljanin,
``Network coding fundamentals,'' \emph{Foundations and Trends in Networking}, June 2007.

\bibitem{MacSloane-Coding} F.~J.~Macwilliams and N.~J.~A.~Sloane,
``The theory of error correcting codes,'' \emph{North-Holland}, 2006.

\bibitem{Horn}Horn and Johnson, ``Matrix analysis'',  Cambridge press, 1985.



\bibitem{MiUp-ProbComp} M.~Mitzenmacher and E.~Upfal, ``Probability and 
computing, randomized algorithm and probabilistic analysis,''
\emph{Cambridge University Press}, 2006.


%=========================================

\bibitem{WARPref} Rice University Wireless Open-Access Research Platform  (WARP), \emph{http://warp.rice.edu}.

%=============================================================



\bibitem{Wyner75} A.~D.~Wyner, ``The wire-tap channel,''
\emph{Bell System Tech. J.}, vol.~54, pp.~1355-1387, Oct. 1975.


\bibitem{Ma-IT93} U.~M.~Maurer, ``Secret key agreement by public
discussion from common information,''
\emph{IEEE Transactions on Information Theory}, vol.~39, pp.~733-742, 1993.

\bibitem{FuzzyExt} B. Kanukurthi and L. Reyzin, ``Key agreement from close secrets over unsecured channels,''
\emph{EUROCRYPT}, 2009.

\bibitem{PDT09} E.~Perron, S.~N.~Diggavi, and E.~Telatar,
``On noise insertion strategies for wireless network secrecy,''
\emph{Information Theory and Applications workshop (ITA)}, pp.~77-84,
UCSD, Feb. 2009.

\bibitem{CsNa08} I.~Csiszar and P.~Narayan,
``Secrecy capacities for multiterminal channels,''
\emph{IEEE Transactions on Information Theory}, vol.~54, no.~6,
pp.~2437-2452, June 2008.

\bibitem{GoAn-P2} A.~Gohari and V.~Anantharam,
``Information-theoretic key agreement of multiple terminals - Part II:
channel model,''
\emph{IEEE Transactions on Information Theory}, vol.~56, no.~8,
pp.~3997-4010, Aug. 2010.


\bibitem{unconv-key-man} J.~Hershey, A.~Hassan, and R.~Yarlagadda,
``Unconventional cryptographic keying variable management,''
\emph{IEEE Transactions on Communications}, vol.~43, no.~1, pp.~3-6, Jan.
1995.


\bibitem{signal-envelope} B.~Azimi-Sadjadi, A.~Kiayias, A.~Mercado, and
B.~Yener,
``Robust key generation from signal envelopes in wireless networks,''
in \emph{ACM CCS}, Alexandria, Virginia, USA, 2007.

\bibitem{telepathy-journal} C.~Ye, S.~Mathur, A.~Reznik, Y.~Shah,
W.~Trappe, and N.~Mandayam,
``Information-theoretically secret key generation for fading wireless
channels,''
\emph{IEEE Transactions on Information Forensics and Security}, vol.~5,
no.~2, pp.~240-254, Jun. 2010.


\bibitem{ipsn2010} J.~Croft, N.~Patwari, and S.~Kasera, ``Robust
uncorrelated bit extraction methodologies for wireless sensors,'' in
\emph{ACM IPSN},  Sweden, Apr. 2010.

\bibitem{uwb} R.~Wilson, D.~Tse, and R.~Scholtz, ``Channel identification:
Secret sharing using reciprocity in ultrawideband channels,''
\emph{IEEE Transactions on Information Forensics and Security}, vol.~2,
no.~3, pp.~364-375, Sep. 2007.


\bibitem{phase} H.~Koorapaty, A.~Hassan, and S.~Chennakeshu, ``Secure
information transmission for mobile radio,''
\emph{IEEE Communications Letters}, vol.~4, no.~2, pp.~52-55, Feb.
2000.


\bibitem{infocom2010} S.~Xiao, W.~Gong, and D.~Towsley, ``Secure wireless
communication with dynamic secrets,''
in \emph{IEEE INFOCOM}, 2010.




\bibitem{ipsn2009} A. Arora and  L. Sang, ``Dialog codes for secure wireless
communications,''
in \emph{ACM IPSN}, pp.~13-24, San Francisco, Apr. 2009.


\bibitem{EG10} Y.~Abdallah, M.~Latif, M.~Youssef, A.~Sultan, and
H.~El-Gamal, ``Keys through ARQ: theory and practice,'' \emph{IEEE
Transactions on Information  Forensics and Security}, pp. 737-751, Sept. 2011.


%==================================
\bibitem{last} M.~Burmester and Y.~Desmedt,  ``A secure and efficient
conference key distribution system,'' {\em Advances in Cryptology,
EUROCRYPT'94, Lecture Notes in  Computer Science},  1995.

\bibitem{Perrig} 
Y. Kim, A. Perrig, and G. Tsudik.
``Group key agreement efficient in communication,''
\emph{IEEE Transactions on Computers}, vol~53, pp.~905-921, July 2004.


%==========================================================



\bibitem{Sprintson10} S. El Rouayheb, A. Sprintson, and P. Sadeghi, ``On coding for cooperative data exchange,'' \emph{IEEE Information Theory Workshop (ITW)}, Jan. 2010.

\bibitem{Courade10} T. A. Courtade and R. D. Wesel, ``Efficient universal recovery in broadcast networks,''  \emph{Allerton Conference on Communication, Control, and Computing}, Oct. 2010.

\bibitem{W1} W. Xu, W. Trappe, Y. Zhang, and T. Wood, ``The feasibility of
launching and detecting jamming attacks in wireless networks,'' \emph{ACM
MobiHoc}, 2005.
\bibitem{W2} C. Popper, M. Strasser, and S. Capkun,
``Jamming-resistant broadcast communication without shared keys,''
\emph{USENIX Security Symposium}, 2009.

\end{thebibliography}
\end{document}